\newtheorem{theorem}{Theorem}
\begin{document}

\title{
Enhancing Security and Privacy in Federated Learning using Low-Dimensional Update Representation and Proximity-Based Defense
}

\author{

\IEEEauthorblockN{Wenjie Li,
Kai Fan,~\IEEEmembership{Member, IEEE},
Jingyuan Zhang, 
Hui Li,~\IEEEmembership{Member,~IEEE},\\
 Wei Yang Bryan Lim,
 and Qiang Yang,~\IEEEmembership{Fellow,~IEEE}}

\thanks{This work was supported in part by the National Natural Science Foundation of China (No.62372356 and No.92067103), the Key Research and Development Program of Shaanxi (2021ZDLGY06-02), the Natural Science Foundation of Shaanxi Province (No.2019ZDLGY1202), the Shaanxi Innovation Team Project (No.2018TD-007), the Xi’an Science and Technology Innovation Plan (No.20189168CX9JC10), the 111 Center (B16037), the NTU Startup Grant, RIE2025 Industry Alignment Fund – Industry Collaboration Projects (IAF-ICP) (Award I2301E0026), administered by A*STAR, as well as supported by Alibaba Group and NTU Singapore through Alibaba-NTU Global e-Sustainability CorpLab (ANGEL). (Corresponding author: Kai Fan)}
\thanks{Wenjie Li is with the State Key Laboratory of Integrated Service Networks, Xidian University, Xi'an 710071, China and also with the College of Computing and Data Science, Nanyang Technological University, Singapore 639798 (e-mail: tom643190696@gmail.com).} \thanks{Kai Fan and Hui Li are with the State Key Laboratory of Integrated Service Networks, Xidian University, Xi'an 710071, China (e-mail: kfan@mail.xidian.edu.cn; lihui@mail.xidian.edu.cn).}  \thanks{Jingyuan Zhang and Wei Yang Bryan Lim are with the College of Computing and Data Science, Nanyang Technological University, Singapore 639798 (e-mail: jzhang149@e.ntu.edu.sg; bryan.limwy@ntu.edu.sg).} \thanks{Qiang Yang is with the Department of Computer Science and Engineering,
Hong Kong University of Science and Technology, Hong Kong 999077, China (e-mail: qyang@cse.ust.hk).}

}


\markboth{Journal of \LaTeX\ Class Files,~Vol.~14, No.~8, August~2021}%
{Shell \MakeLowercase{\textit{et al.}}: A Sample Article Using IEEEtran.cls for IEEE Journals}


\maketitle

\begin{abstract}

Federated Learning (FL) is a promising privacy-preserving machine learning paradigm that allows data owners to collaboratively train models while keeping their data localized. Despite its potential, FL faces challenges related to the trustworthiness of both clients and servers, particularly against curious or malicious adversaries. In this paper, we introduce a novel framework named \underline{F}ederated \underline{L}earning with Low-Dimensional \underline{U}pdate \underline{R}epresentation and \underline{P}roximity-Based defense (FLURP), designed to address privacy preservation and resistance to Byzantine attacks in distributed learning environments. FLURP employs $\mathsf{LinfSample}$ method, enabling clients to compute the $l_{\infty}$ norm across sliding windows of updates, resulting in a Low-Dimensional Update Representation (LUR). Calculating the shared distance matrix among LURs, rather than updates, significantly reduces the overhead of Secure Multi-Party Computation (SMPC) by three orders of magnitude while effectively distinguishing between benign and poisoned updates. Additionally, FLURP integrates a privacy-preserving proximity-based defense mechanism utilizing optimized SMPC protocols to minimize communication rounds. Our experiments demonstrate FLURP's effectiveness in countering Byzantine adversaries with low communication and runtime overhead. FLURP offers a scalable framework for secure and reliable FL in distributed environments, facilitating its application in scenarios requiring robust data management and security. 

\end{abstract}

\begin{IEEEkeywords}
Federated Learning, Byzantine Resistance, Privacy Preservation, Secure Multi-Party Computation, Distributed Learning.
\end{IEEEkeywords}

\section{Introduction}

\IEEEPARstart{F}{ederated} Learning (FL)~\cite{fedavg-pmlr-v54-mcmahan17a} is a distributed machine learning paradigm that enables data owners to collaboratively train models while keeping their data local, sharing only model updates with a central server. The server integrates these updates based on an Aggregation Rule (AR), thus facilitating collaborative model training. Notably, models trained via FL can achieve accuracies comparable to those trained with centralized methods~\cite{asgoodas-CAI2024102411}.

However, the effectiveness of FL critically depends on the trustworthiness of both the data owners (i.e., clients) and the server. The presence of curious or malicious adversaries in FL introduces a complex interplay between clients and servers~\cite{shieldfl-ma2022shieldfl, pbfl-li2023pbfl, fltrus-DBLP:conf/ndss/CaoF0G21, flod-DBLP:conf/esorics/DongCLWZ21}. In classic AR such as Federated Averaging (FedAvg)~\cite{fedavg-pmlr-v54-mcmahan17a}, the server performs the weighted averaging-based aggregation on updates from clients indiscriminately. This process is vulnerable to Byzantine adversaries that may tamper with local data or modify the training process to generate poisoned local model updates~\cite{minmax-shejwalkar2021manipulating}. Studies~\cite{backdoor-DBLP:journals/corr/abs-1911-07963, backdoor-pmlr-v108-bagdasaryan20a, multikrum-NIPS2017_f4b9ec30} have shown that even a single malicious client can cause the FedAvg-derived global model to deviate from the intended training direction. Byzantine adversaries may also execute attacks sporadically~\cite{multikrum-NIPS2017_f4b9ec30, lf-jere2020taxonomy}, including backdoor attacks that may require only a single round to embed a persistent backdoor in the global model~\cite{backdoor-DBLP:journals/corr/abs-1911-07963, backdoor-pmlr-v108-bagdasaryan20a}. Since the server cannot access the client's data, Byzantine-resistant aggregation rules~\cite{multikrum-NIPS2017_f4b9ec30, trimmedmean-pmlr-v80-yin18a, fltrus-DBLP:conf/ndss/CaoF0G21, fltracer-zhang2023fltracer} need to be implemented.

Moreover, the server may compromise clients' data privacy, e.g., through data reconstruction attacks~\cite{DLG-NEURIPS2019_60a6c400, iDLG-DBLP:journals/corr/abs-2001-02610, InvertGrad-NEURIPS2020_c4ede56b} or source attacks~\cite{Sourceinferenceattack-hu2021source} simply by analyzing the plaintext updates from clients. For example, DLG~\cite{DLG-NEURIPS2019_60a6c400} initially generates a random input and then calculates the $l_2$ distance between the fake gradient and the real plaintext gradient as a loss function. InvertGrad~\cite{InvertGrad-NEURIPS2020_c4ede56b} employs cosine distance and total variation as loss function. As such, clients need to encrypt or perturb the uploaded updates. However, considering the false sense of security~\cite{xujiaanquanganyuyigongji-DBLP:conf/uss/YueJWBD23} brought by perturbation and quantization, mechanisms such as Secure Multi-Party Computation (SMPC)~\cite{assABY-demmler2015aby, asscryptflow2-rathee2020cryptflow2} are necessary to safeguard the privacy of updates.

To address the aforementioned issues, we need a privacy-preserving but efficient defense to maintain the privacy of clients and robustness of FL system. We find that using sliding windows to calculate the $l_{\infty}$ vector of model updates can still capture subtle differences between benign updates and those that are poisoned, even when the latter inject deviations based on the element-wise mean and variance of benign updates or are trained on tampered samples. Consequently, we propose using this $l_{\infty}$ vector, termed Low-Dimensional Update Representation (LUR), to compute the pairwise squared Euclidean distance (SED) instead of directly using model updates, significantly reducing the computational overhead associated with SMPC on servers. Furthermore, we introduce a novel proximity-based defense to eliminate poisoned updates, implemented through a series of SMPC protocols that systematically convert the sharing of the SED matrix into client qualifications. The key contributions of this paper are as follows:


\begin{itemize}
    \item We introduce FLURP, a novel framework that integrates the LUR calculation method, $\mathsf{LinfSample}$, and a privacy-preserving proximity-based defense to mitigate poisoned model updates. FLURP allows clients to perform minimal additional computations and upload LURs, significantly reducing the server's computational overhead for SED by three orders of magnitude, while simultaneously enhancing the security and privacy of FL.
    \item We develop optimized privacy-preserving SMPC protocols tailored for FLURP, including packed comparison techniques for \textit{Arithmetic Sharing}, homomorphic encryption-based multi-row shuffle protocols, and fast selection protocols using packed comparison. Using parallelism, these protocols minimize communication rounds, substantially improving system efficiency and scalability.
    \item Experimental results empirically demonstrate FLURP's resilience against eight state-of-the-art adversarial strategies, across varying malicious client ratios and both image and text datasets. The proposed SMPC protocols offer provable security guarantees with linearly scaled communication and computational costs.
\end{itemize}

\section{Related Work}

\subsection{Byzantine Resilient Aggregation Rule (BRAR)}

Extensive research has focused on designing aggregation rules to defend against Byzantine adversaries. These methods typically assume fully trustworthy servers with unrestricted access to plaintext local model updates.

In scenarios where prior knowledge is limited, such as the number of malicious clients (denoted as $\beta$), the server computes statistical information of updates, including geometric distance, median, and norm. The Multikrum~\cite{multikrum-NIPS2017_f4b9ec30} selects updates with the smallest sum of distances to the other $m-\beta-2$ updates, where $m$ is the number of clients. The Trimmedmean~\cite{trimmedmean-pmlr-v80-yin18a} removes the $\beta$ extreme values from each dimension of the updates and computes the average of the remaining values. However, these methods, relying on dimensional-level calculations and distance measures, have been shown to be vulnerable to complex attacks~\cite{minmax-shejwalkar2021manipulating, li2024blades}.

Other aggregation strategies utilize clustering and clipping to detect poisoned updates from multiple perspectives. The ClippedClustering~\cite{cc-10018261} adaptively clips updates based on the median of historical norms and employs agglomerative hierarchical clustering to partition updates into two clusters, averaging those in the larger cluster. The DnC~\cite{minmax-shejwalkar2021manipulating} uses singular value decomposition to detect outliers and applies random sampling to reduce dimensionality. The MESAS~\cite{MESAS-CCS23} selects six metrics of local models to filter poisoned updates effectively.

Trust-based methods rely on the server's collection of a clean root dataset to bootstrap trust, or assume clients will honestly propagate the local model based on this dataset. The FLTrust~\cite{fltrus-DBLP:conf/ndss/CaoF0G21} and the FLOD~\cite{flod-DBLP:conf/esorics/DongCLWZ21} use this approach to defend against a significant fraction of malicious clients. These methods discard updates with directions greater than 90 degrees from the clean model, then aggregate the remaining updates based on cosine similarity. However, FLOD~\cite{flod-DBLP:conf/esorics/DongCLWZ21} increases the number of rounds required for convergence and reduces the global model's accuracy by using signed updates for aggregation. FLARE~\cite{flare-wang2022flare} requires forward propagation on a designated dataset, uploading penultimate-layer representations, and calculates the maximum mean discrepancy between these representations as a distance measure for model aggregation.

\subsection{Privacy-Preserving Byzantine Resilient Aggregation Rule (ppBRAR)}

Recent studies have sought to simultaneously mitigate privacy leakage from aggregators targeting plaintext model updates and defend against Byzantine adversaries. These approaches adapt BRAR to ppBRAR~\cite{pbfl-li2023pbfl, flod-DBLP:conf/esorics/DongCLWZ21, shieldfl-ma2022shieldfl} by utilizing SMPC. This transformation replaces the centralized server with multiple distributed servers that receive encrypted model updates, which are then collaboratively processed to execute the ppBRAR protocol.

The PPBR~\cite{ppbrf-dongcai2023privacy} designs a 3-party computation (3PC)-based maliciously secure top-$k$ protocol that calculates the sum of the largest $k$ similarities for each update, selecting those with the highest scores for weighted aggregation. The ShieldFL~\cite{shieldfl-ma2022shieldfl} leverages double-trapdoor homomorphic encryption to compute cosine similarities between local updates and the global update of the previous round, assigning lower weights to updates with the lowest similarity. However, it assumes benign behavior in the first round, which is unrealistic in the case of a static adversary. The LFR-PPFL~\cite{iotnoniidlf-shen2023privacy} tracks distance changes between updates to identify malicious clients, but it assumes a static adversary that behaves maliciously in every round. The RoFL~\cite{rofl-DBLP:conf/sp/LycklamaBVKH23} and ELSA~\cite{ELSA-DBLP:conf/sp/RatheeSWP23} limit the norms of updates to reduce the impact of poisoned updates, with the former using zero-knowledge proofs to ensure norm correctness, and the latter using multi-bit Boolean sharing.

While BRARs fail to address privacy leakage from plaintext updates, ppBRARs, which lack prior knowledge (e.g., a clean dataset), generally require computing pairwise distances or similarities between model updates. Privately calculating these metrics for $m$ updates under SMPC involves $\binom{m}{2}$ full model-sized multiplications~\cite{flod-DBLP:conf/esorics/DongCLWZ21, ppbrf-dongcai2023privacy, pbfl-li2023pbfl}. For models with millions of parameters, these operations dominate the overall time complexity~\cite{pbfl-li2023pbfl}. Additionally, clustering-based ppBRAR methods~\cite{pbfl-li2023pbfl, RFBDS-chen2023privacy} require numerous interaction rounds. Therefore, minimizing the overhead of pairwise distance computation and poisoned update filtering becomes a critical challenge.

\section{Preliminaries}

\subsection{Federated Learning}
 A classical FL algorithm typically includes an initial step 1, followed by $T$ repetitions of steps 2-4.

\begin{itemize}
  \item \textbf{Step 1:} The server initializes a global model randomly.
  \item \textbf{Step 2:} At the global round $t$, the server selects a set of clients $\mathcal{C}_t$ and broadcasts the global model $\boldsymbol{\omega}^t$.
  \item \textbf{Step 3:} Each client $i$ locally performs the replacement $\boldsymbol{\omega}_i^t\leftarrow \boldsymbol{\omega}^t$ and estimates the gradients of the loss function over their private dataset for $E$ epochs. Then $i$ computes and uploads the local model update $\boldsymbol{g}_i^t \leftarrow \boldsymbol{\omega}_i^t - \boldsymbol{\omega}^t$.
  \item \textbf{Step 4:} The server collects the local model updates and utilizes an AR to compute the global update, which is then used to update the global model.
\end{itemize}

Note that when the aggregating party implements ppBRAR, the centralized server is replaced by multiple distributed servers.

\subsection{Byzantine Attacks in FL}\label{sec:byzantine}

The Byzantine adversary encompasses a spectrum of malicious behaviors, whereby malicious clients manipulate local data or the training process to generate poisoned updates, in an attempt to disrupt the training process of the global model. Specifically, these attacks can be classified into \textit{Untargeted} and \textit{Targeted} attacks based on their objectives. We denote the set of clients corrupted by the adversary as $\mathcal{C'}$.

\subsubsection{Untargeted attacks}

In untargeted attacks, adversaries construct poisoned updates that, when aggregated, cause the model to lose usability or delay convergence. The feasible attack types depend on the adversary's knowledge and capabilities. Adversary with limited knowledge of malicious clients' data and training can launch:
\begin{itemize}
        \item \textbf{LabelFlipping}~\cite{lf-jere2020taxonomy}. The adversary flips labels of all local samples. In a dataset with $c$ classes, each label $y$ is transformed to $c - 1 - y$. 
        \item \textbf{SignFlipping}~\cite{sf-an-li2019rsa}. The adversary inverts the signs of all gradient entries before updating the local model, essentially reversing gradient descent.
        \item \textbf{Noise}-$(\mu, \sigma^2)$~\cite{sf-an-li2019rsa}. The adversary uploads a vector $\boldsymbol{x}$ with elements sampled from $\mathcal{N}(\mu, \sigma^2)$. We use the standard normal distribution, denoted as Noise-(0,1).
\end{itemize}

The adversary, with knowledge of all benign client updates, constructs sophisticated attacks by calculating the element-wise mean ($\boldsymbol{\mu}$) and standard deviation ($\boldsymbol{\sigma}$) of these updates. They can launch:
\begin{itemize}   
        \item \textbf{ALIE}~\cite{alie-baruch2019little}. The adversary calculates $\alpha$ using the Inverse Cumulative Distribution Function of a standard normal distribution. For each $i \in \mathcal{C'}$, $\boldsymbol{g}_{i} = \boldsymbol{\mu} + \boldsymbol{\sigma} \cdot \alpha$, introducing an offset. 
        \item \textbf{MinMax}~\cite{minmax-shejwalkar2021manipulating}. The adversary computes a scaling factor $\alpha$ to ensure the maximum distance from any poisoned update $\boldsymbol{g}_{i}$ to any benign update does not exceed the maximum distance between benign updates. Each poisoned update is $\boldsymbol{g}_{i} = \boldsymbol{\mu} - \boldsymbol{\sigma} \cdot \alpha$.
        \item \textbf{IPM}-$\alpha$~\cite{ipm-pmlr-v115-xie20a}. Malicious clients upload an average update in the opposite direction, scaled by $\alpha$. For each $i \in \mathcal{C'}$, $\boldsymbol{g}_i = - \alpha \cdot \boldsymbol{\mu}$. We implement two variants: IPM-0.1 ($\alpha = 0.1$) and IPM-100 ($\alpha = 100$).
\end{itemize}

\subsubsection{Targeted attack~/~\textbf{Backdoor} attack\cite{backdoor-DBLP:journals/corr/abs-1911-07963, backdoor-pmlr-v108-bagdasaryan20a, yimingli9802938}}
In a backdoor attack, the adversary embeds a trigger (e.g., a visible red square or subtle perturbation) into features and changes their labels to a specific target. The resulting poisoned model performs correctly on clean samples but misclassifies triggered samples as the target label. In our implementation process, the manipulated clients insert a red square pattern in the upper-left corner of half the samples (for image samples) and prepend the low-frequency word "tt" (for text samples). Subsequently, these clients set the target label to 0.

\subsection{Additive Secret Sharing} \label{subsec:secretshare}

Additive secret sharing~\cite{assABY-demmler2015aby} splits a secret into multiple random shares, each held by different parties, with no individual share revealing any information. The secret is reconstructed by summing the shares. We use two cloud servers, $S_0$ and $S_1$, to perform linear computations on the shares. For an $l$-bit \textit{Arithmetic sharing} $\langle x \rangle$ of $x$, the value of $x$ is given by $x = \langle x \rangle_0 + \langle x \rangle_1 \mod 2^l$, with $\langle x \rangle_0, \langle x \rangle_1 \in \mathbb{Z}_{2^l}$. When $l=1$, the sharing of $x \in \{0,1\}$ is termed \textit{Boolean sharing} $\langle x \rangle^B$. The foundational operations for sharing are defined as follows:
 
\textbf{Splitting.} $\mathsf{Split}(x)$: The holder of a secret $x \in \mathbb{Z}_{2^l}$ chooses $r \in_{R} \mathbb{Z}_{2^l}$ and computes shares $\langle x \rangle_0 = r$ and $\langle x \rangle_1 = x - r \mod 2^l$. The shares are then distributed to $S_0$ and $S_1$. To enhance efficiency, a pre-negotiated seed is used for generating $\langle x \rangle_0$, effectively halving communication overhead.

\textbf{Revealing.} $\mathsf{Reavel}(\langle x \rangle)$: To recover the secret $x$, party $S_b$ sends its share $\langle x \rangle_b$ to the other party $S_{1-b}$, which then computes $x = \langle x \rangle_0 + \langle x \rangle_1 \mod 2^l$.

\textbf{Addition.} $\mathsf{ADD}(\langle x \rangle, \langle y \rangle)$: To compute the sum $\langle z \rangle = \langle x + y \rangle$, $S_b$ locally computes $\langle z \rangle_b = \langle x \rangle_b + \langle y \rangle_b \mod 2^l$.

\textbf{Multiplication.} $\mathsf{MUL}(\langle x \rangle, \langle y \rangle)$: To compute the product $\langle z \rangle = \langle x \cdot y \rangle$, parties utilize the pre-generated \textit{Beaver triple} ($\langle a \rangle$, $\langle b \rangle$, and $\langle c \rangle$, where $c = a \cdot b$) to calculate $e = x - a$ and $f = y - b$. Then, $S_0$ computes $\langle z \rangle_0 = ef + e \langle b \rangle_0 + f \langle a \rangle_0 + \langle c \rangle_0 \mod 2^l$ locally, while $S_1$ computes $\langle z \rangle_1 = e \langle b \rangle_1 + f \langle a \rangle_1 + \langle c \rangle_1 \mod 2^l$ locally.

\textbf{Conversion.} $\mathsf{B2A}(\langle x \rangle^B) \rightarrow \langle x \rangle$: For details on converting \textit{Boolean sharing} $\langle x \rangle^B$ to \textit{Arithmetic sharing} $\langle x \rangle$, refer to \cite{asscryptflow2-rathee2020cryptflow2}.

Specifically, $\mathsf{XOR}$ is used for addition and $\mathsf{AND}$ for multiplication in \textit{Boolean sharing}.

\subsection{Additive Homomorphic Encryption}

AHE allows operations on ciphertexts which correspond to addition or multiplication in the plaintext space, without the need to decrypt. With plaintext space $M$ and ciphertext space $C$, AHE comprises three functions and two operations:

$\mathsf{KeyGen}(1^{\kappa}) \rightarrow (sk, pk)$: Generates a key pair based on the security parameter $\kappa$. 


$\mathsf{Enc}(m, pk) \rightarrow c$: Encrypts plaintext $m \in M$ to ciphertext $c \in C$ using public key $pk$.

$\mathsf{Dec}(c, sk) \rightarrow m$: Decrypts ciphertext $c \in C$ to plaintext $m \in M$ using private key $sk$.

$\oplus$, $\boxplus$: For any plaintexts $m_1, m_2 \in M$, it holds that $\mathsf{Dec}(\mathsf{Enc}(m_1) \oplus \mathsf{Enc}(m_2)) = m_1 + m_2$ and $\mathsf{Dec}(\mathsf{Enc}(m_1) \boxplus m_2) = m_1 + m_2$. Hence, performing the $\oplus$ or $\boxplus$ operation on ciphertexts is equivalent to performing addition in the plaintext space. We employ the \textit{Paillier} scheme~\cite{paillier1999public}, with a 1024-bit plaintext space and a 2048-bit ciphertext space.

\section{Overview}

\subsection{System Model}

The system model (Figure~\ref{fig:systemmodel}) consists of two main components: the Client Side and the Aggregator Side, which cooperate to achieve the FL training task.

\textbf{Client Side}: This side comprises $m$ clients, each with its own local dataset. During each global iteration of FL, clients compute model updates based on their local data and the latest global model. Concurrently, Byzantine clients may launch poisoning attacks to generate poisoned updates. Clients create a compact representation of their updates using the sampling method $\mathsf{LinfSample}$, referred to as LURs. Both model updates and LURs are uploaded to the servers as sharing.

\textbf{Aggregator Side}: This side involves two servers responsible for executing privacy-preserving and Byzantine-robust aggregation. Upon receiving shares from clients, the servers perform a series of SMPC protocols to assess client qualifications for contributing to the global model without accessing plaintext local updates. The servers then reconstruct the global model update and broadcast the new global model. This process iterates until a satisfactory FL model is achieved, ensuring both client privacy protection and robustness against potential Byzantine attacks throughout training.

\begin{figure}[!htbp]
\centering
\includegraphics{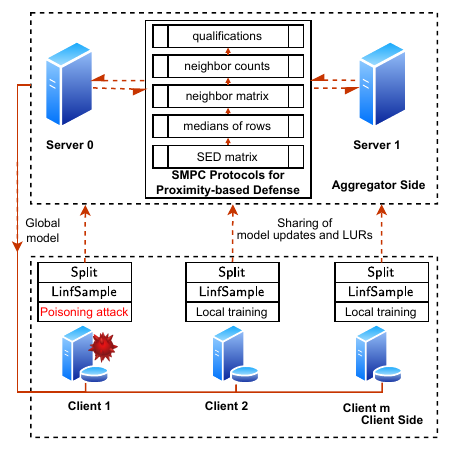}
\caption{System model}
\label{fig:systemmodel}
\end{figure}

\subsection{Threat Model}\label{sec:threatmodel}

We consider a Byzantine adversary $\mathcal{A}_f$ capable of corrupting fewer than half of the clients. This setting is consistent with previous works~\cite{li2024blades, pbfl-li2023pbfl, ppbrf-dongcai2023privacy, RFBDS-chen2023privacy}. At the beginning of FL, $\mathcal{A}_f$ controls the set of malicious clients $\mathcal{C}'$ to implement any one of eight types of attacks, including LabelFlipping, SignFlipping, Noise-(0,1), ALIE, MinMax, IPM-0.1, IPM-100, and Backdoor. These attacks are described in detail in Section~\ref{sec:byzantine}.


We consider a static, probabilistic polynomial-time adversary, $\mathcal{A}_h$, who is honest-but-curious~\cite{simulatesecurity-lindell2017simulate, asscryptflow2-rathee2020cryptflow2}. This adversarial model, often implemented using two competing cloud service providers, is widely adopted in practice due to reputational and financial incentives against collusion or protocol violations. $\mathcal{A}_h$ may control at most one server ($S_0$ or $S_1$) at the onset of FLURP. While adhering to the protocol, $\mathcal{A}_h$ attempts to infer client information from protocol transcripts, motivated by commercial interests. Under the threat of $\mathcal{A}_h$, $S_0$ and $S_1$ will not collude to recover the plaintext local model updates from the received secret shares, and therefore cannot launch sample reconstruction attacks~\cite{DLG-NEURIPS2019_60a6c400, iDLG-DBLP:journals/corr/abs-2001-02610, InvertGrad-NEURIPS2020_c4ede56b} or property inference attacks~\cite{Sourceinferenceattack-hu2021source} against specific clients. We define the security of our SMPC protocols using the ideal-world/real-world simulation paradigm~\cite{asscryptflow2-rathee2020cryptflow2}. An SMPC protocol $\Pi$ is considered secure if, within this hybrid model, $\mathcal{A}_h$ cannot distinguish between the protocol's real-world and ideal-world executions.
 

\subsection{Design Goals}

\textbf{Byzantine Robustness:} FLURP leverages the combination of SED matrices computed from LURs and proximity-based defense to defend against all eight attack types enumerated in Section~\ref{sec:threatmodel}.

\textbf{Privacy Preservation:} The aggregation process, implemented via SMPC protocols, prevents the leakage of statistical information about updates. Servers gain no additional knowledge beyond identifying which clients' updates will be aggregated.

\textbf{Efficiency:} FLURP aims to significantly reduce the overhead incurred by SMPC protocols compared to existing schemes that compute distances and perform clustering on full-sized updates \cite{RFBDS-chen2023privacy, cc-10018261}.

\textbf{Scalability:} Our tailored SMPC protocols exhibit superior communication round complexity and runtime efficiency, enabling FLURP to scale effectively as the number of clients increases.

\subsection{Defense Method with Plaintext Updates}

To better understand robust aggregation, we idealize the servers as fully trustworthy in this section, focusing solely on the adversary $\mathcal{A}_f$. We describe two key technical components used to identify benign updates.

\subsubsection{LinfSample}\label{subsec:linfsample}

Each client $i$ executes the $\mathsf{LinfSample}$ operation to compute the Low-Dimensional Update Representation (LUR) in two steps: 1. Flatten the model update: $\boldsymbol{f}_i = \mathsf{flatten}(\boldsymbol{g}_i) \in \mathbb{R}^n$. 2. Compute the $l_{\infty}$ norm of the entries selected by the sliding window to obtain the LUR $\boldsymbol{v}_i \in \mathbb{R}^{l}$:

\begin{equation}\label{eq:linfsample}
v_{i,j} = \left[ \max  \left\{ \lvert \boldsymbol{f}_{i, s \cdot j + k} \rvert \;\middle|\; k = 0, 1, ..., s-1 \right\} \right]_{j=0}^{l-1} .
\end{equation}
 
 The window size and stride are both set to $s$ with the ceiling mode. So the length of LUR is $d = \lfloor \frac{n}{s} \rfloor +  1 \{n \bmod s \neq 0\}$. The optimal value of $s$ is discussed in Section~\ref{sec:windowsize}.

\subsubsection{Proximity-based Defense}
Our threat model assumes that the majority of clients are honest. In the absence of prior knowledge and clean datasets, a common approach is to identify a cluster containing more than half of the total updates. Since updates are unlabeled, traditional k-Nearest Neighbors (k-NN) clustering cannot be applied. Therefore, we design a neighbor-counting mechanism similar to an unlabeled k-NN approach. Given $m$ participating clients, at the beginning of each global round, each client's neighbor count is initialized to $0$. The server computes the $\lfloor\frac{m}{2}\rfloor$ closest clients for each client $i$, and increments their neighbor counts by $1$. Subsequently, clients with neighbor counts of at least $\lfloor\frac{m}{2}\rfloor$ are classified as benign, and their updates are aggregated using weighted averaging to form the global update.

\section{Construction}
This section first introduces the FLURP framework, followed by the specific constructions of SMPC protocols it employs. The symbols used are listed in Table~$\ref{tab:symbol}$.

\begin{table}[!htbp]
\centering
\caption{Symbol Table}
\label{tab:symbol}
\begin{tabular}{ll|ll} 
\hline
Symbol                               & Description               & Symbol                                           & Description         \\ 
\hline
$S_0, S_1$                           & Servers                   &  $\langle \cdot \rangle^B$                       & Boolean sharing     \\
$\mathcal{C}$                        & Clients                   &  $\langle \cdot \rangle$                         & Arithmetic sharing  \\
$\boldsymbol{\omega}$                & Global model              &  $\llbracket \cdot \rrbracket$                 & AHE Ciphertext      \\
$\boldsymbol{g}_i$                   & Update of~$i$             &  $\in_{R}$                                       & Uniform sampling    \\
$\boldsymbol{v}_i$                   & LUR ($l_\infty$ vector)         &  $\kappa$                                        & Security parameter  \\
$\boldsymbol{M}$                     & SED matrix                &  $\{a_i\}_{i=1}^n$                               & Set of $a_i$        \\
$\mu_i$                              & Row median                &  $[a_i]_{i=1}^n$                                 & List of $a_i$       \\
$\boldsymbol{N}$                     & Neighbor matrix           &  $[x]_n$                                         & $n$ copies of $x$   \\
$s_i$                                & Neighbor count            &  $\{0,1\}^\kappa$                                & $\kappa$-bit strings\\
$q_i$                                & Qualification             &  $\boldsymbol{\pi}$                              & Permutation set     \\
$\mathcal{Y}$                        & Qualified clients         &  $\mathsf{reshape}(\cdot,\text{shape})$          & Reshape to shape    \\
$\tilde{\boldsymbol{M}}$             & Shuffled $\boldsymbol{M}$ &       $\mathcal{F}_{\mathsf{protocol}}$          &  Ideal function     \\
$\boldsymbol{M}^{\boldsymbol{\pi}}$  & Shuffled by ${\boldsymbol{\pi}}$  &  $\mathsf{Enc}(\cdot, pk)$               & Encryption          \\
$\bar{\boldsymbol{d}}$               & Dictionary                &  $\mathsf{Dec}(\cdot, sk)$                       & Decryption          \\
$\oplus, \boxplus$                  & AHE addition              &  $\mathsf{len}(\cdot)$                           & Sequence length     \\
$pk$                                 & Public key                &  $\mathsf{matrixShuffle}$                        & Matrix shuffle      \\
$sk$                                 & Secret key                &  $\mathsf{slice}(\cdot, \text{from}, \text{to})$ & Slice the sequence  \\
\hline
\vspace{-30pt}
\end{tabular}
\end{table}

\subsection{Framework}

We propose the FLURP framework (Algorithm~\ref{framework}) to enhance model robustness against adversarial manipulations by employing an SMPC-based ppBRAR. The framework operates with servers $S_0$ and $S_1$, $m$ clients, their local datasets, and publicly preset parameters.

On the client side, after local training, each client $i$ applies $\mathsf{LinfSample}$, as described in Section~\ref{subsec:linfsample}, to their update $\boldsymbol{g}_i$, yielding the local update representation (LUR) $\boldsymbol{v}_i$. Subsequently, both $\boldsymbol{g}_i$ and $\boldsymbol{v}_i$ are split into \textit{Arithmetic sharing} and transmitted to the servers. 

On the aggregator side, the servers execute a series of SMPC protocols to implement a privacy-preserving proximity-based defense. They compute secret shares of the following:
(1) the SED matrix of LURs,
(2) the median of each row in the SED matrix,
(3) the neighbor matrix,
(4) neighbor counts for all clients, and
(5) qualifications of all clients, which determine which clients can contribute to the global model.

Specifically, the servers compute the shared SED matrix $\langle \boldsymbol{M} \rangle$ of $[\langle \boldsymbol{v}_i \rangle]_{i=1}^{m}$. For $i, j \in [1, m]$ where $i \neq j$, they compute $\langle M_{i,j} \rangle = \langle M_{j,i} \rangle = \mathsf{MUL}(\langle \boldsymbol{v}_i \rangle - \langle \boldsymbol{v}_j \rangle, \langle \boldsymbol{v}_i \rangle - \langle \boldsymbol{v}_j \rangle)$. For cases where $i = j$, the servers set $\langle M_{i,j} \rangle_0 = \langle M_{i,j} \rangle_1 = 0$. FLURP employs the meticulously designed SMPC protocol $\mathsf{packedCompare}$ as the underlying component for comparisons across multiple pairs of shares in subsequent protocols. $\mathsf{packedCompare}$ features extremely low communication round overhead.

\begin{algorithm}[!htbp]
\caption{The FLURP framework}\label{framework}
\begin{algorithmic}[1]
  \Statex {\bf Input:} Client set $\mathcal{C} = \lbrace C_1, C_2,\ldots, C_m \rbrace$, two servers $S_0$ and $S_1$, the number of global rounds $T$, local learning rate $\eta_l$, batch size $b$, the collection of local datasets $\mathcal{D} = \lbrace \mathcal{D}_1, \mathcal{D}_2, \ldots, \mathcal{D}_m \rbrace$, the size $s$ of the sampling window used by $\mathsf{LinfSample}$
  \Statex  {\bf Output:} Global model parameter $\boldsymbol{\omega}$
  
  \State Randomly initialize the global model $\boldsymbol{\omega}_0$
  
  \For{global round $t\in [0, T-1]$}
    \State $\triangleright$ \textbf{Client side}:
        \For{each client $i \in [1, m]$ \textbf{in parallel}}
          \State Run $\text{SGD}(b, \boldsymbol{\omega}_t, \mathcal{D}_i, \eta_l)$ for $E$ epochs to get $\boldsymbol{g}_i$ 
          \State $\boldsymbol{v}_i \gets \mathsf{LinfSample}(\boldsymbol{g}_i, s)$
          \State $\langle \boldsymbol{v}_i \rangle, \langle \boldsymbol{g}_i \rangle \gets \mathsf{Split}(\boldsymbol{v}_i), \mathsf{Split}(\boldsymbol{g}_i)$  
      \EndFor
    
    \State $\triangleright$ \textbf{Aggregator side}:  
    
    \State Calculate the shared SED matrix $\langle \boldsymbol{M} \rangle$ of $[ \langle \boldsymbol{v}_i \rangle ]_{i=1}^{m}$ 

    \State $\langle \tilde{\boldsymbol{M}} \rangle \gets \mathsf{matrixSharedShuffle}(\langle \boldsymbol{M} \rangle)$
    
    \State \textbf{\#} Calculate the shared median of each row in $\langle \boldsymbol{M} \rangle$

    \State $[\langle \mu_i \rangle]_{i=1}^{m} \gets \mathsf{mulRowQuickSelect}(\langle \tilde{\boldsymbol{M}} \rangle, [\lfloor\frac{m}{2}\rfloor]_m)$
    
    \State \textbf{\#} Calculate the shared neighbor matrix $\langle \boldsymbol{N} \rangle$
       
    \State $[\langle r_i \rangle^B]_{i=1}^{m^2} \gets \mathsf{packedCompare}(\mathsf{flatten}(\langle \boldsymbol{M} \rangle),$ 
    $[ [\langle \mu_1 \rangle]_m, \ldots, [\langle \mu_m \rangle]_m ])$
    
    \State $[\langle r_i \rangle]_{i=1}^{m^2} \gets \mathsf{B2A}([\langle r_i \rangle^B]_{i=1}^{m^2})$

    \State $\langle \boldsymbol{N} \rangle \gets \mathsf{reshape}([\langle r_i \rangle]_{i=1}^{m^2}, m \times m)$

    \State \textbf{\#} Calculate the shared neighbor count $\langle s_i \rangle$ of each client $i$ 
    
    \State For $i \in [1,m]$, $\langle s_i \rangle \gets \mathsf{ADD}(\lbrace \langle {N}_{1, i} \rangle, \ldots, \langle {N}_{m, i} \rangle \rbrace)$

    \State \textbf{\#} Identify clients with neighbor count of at least $\lfloor \frac{m}{2} \rfloor$
    
    \State $[\langle q_i \rangle^B]_{i=1}^{m}\gets \mathsf{packedCompare}([\lfloor \frac{m}{2} \rfloor-1]_m, [\langle s_i\rangle]_{i=1}^{m})$
        
    \State $[ q_i ]_{i=1}^{m} \gets \mathsf{Reveal}([\langle q_i \rangle^B]_{i=1}^{m})$, set $\mathcal{Y} = \{i : q_i = 1\}$

    \State $\langle \boldsymbol{g} \rangle \gets \mathsf{ADD}(\lbrace \mathsf{MUL}(\frac{|\mathcal{D}_i|}{\sum_i{|\mathcal{D}_i|}} \cdot \langle \boldsymbol{g}_i \rangle )\rbrace_{i \in \mathcal{Y}})$
    \State $\boldsymbol{g} \gets \mathsf{reveal}(\langle \boldsymbol{g} \rangle)$

    \State Broadcast $\boldsymbol{\omega}_{t+1} \gets \boldsymbol{\omega}_t - \boldsymbol{g}$ to $\mathcal{C}$
  \EndFor
  \State \Return $\boldsymbol{\omega}_{T}$
  \end{algorithmic}
\end{algorithm}

To prevent the leakage of neighborhood relationships among clients due to the revelation of comparison results during the quick select process~\cite{pbfl-li2023pbfl, shuffleandreveal-hamada2013practically}, the servers invoke $\mathsf{matrixSharedShuffle}$ to independently shuffle all rows of $\langle \boldsymbol{M} \rangle$ and produce $\langle \tilde{\boldsymbol{M}} \rangle$. The servers then use $\mathsf{mulRowQuickSelect}$ to compute shares of the medians for all rows, denoted $\lbrace \langle \mu_i \rangle \rbrace_{i \in [1,m]}$.

Recall that, half of $m$ clients closest to client $i$ are regarded as $i$'s neighborhood. That is, for any two clients $i, j \in [1, m]$, if $\langle {M}_{i,j} \rangle < \langle \mu_{i} \rangle$, client $j$ is a neighbor of client $i$, resulting in $\langle {N}_{i,j} \rangle = 1$ in the neighbor matrix $\langle \boldsymbol{N} \rangle$. 

The servers compute the sum of the $i$-th column of $\langle \boldsymbol{N} \rangle$ as the neighbor count $\langle s_i \rangle$ received by client $i$.

The servers compare the shares of $m$ pairs of neighbor counts and thresholds, where $\langle q_i \rangle^B = 1$ indicates that the neighbor count $s_i$ is greater than $\lfloor \frac{m}{2} \rfloor - 1$. In this case, $\boldsymbol{g}_i$ is considered benign and qualifies for aggregation in the current round.

Finally, the servers reveal $\langle q_i \rangle^B$ and aggregate the updates of the clients based on weights, resulting in a global update used to update the global model and proceed with the next round of global iteration. 

It is noteworthy that the global updates, obtained through splitting, aggregating, and revealing local model updates, exhibit virtually no impact on precision. The FLURP framework utilizes carefully designed secure and efficient SMPC protocols to enable servers to shuffle, anonymize, and collectively determine on the most reliable updates through a robust proximity-based defense. It also effectively diminishes the impact of potentially poisoned updates by ensuring that only those endorsed by a majority are considered.



\subsection{packedCompare}

The state-of-the-art Millionaires' protocol implementation, $\mathcal{F}_{\mathsf{Mill}}$~\cite{asscryptflow2-rathee2020cryptflow2}, eliminates the costly overhead associated with converting \textit{Arithmetic Sharing} to \textit{Yao Sharing}, which serve as inputs in garbled circuits~\cite{assABY-demmler2015aby}. We propose $\mathsf{packedCompare}$ (Algorithm~\ref{packedCompare}) that concatenates $n$ $l$-bit arithmetic shares bit-wise, enabling simultaneous comparison of multiple pairs with a round complexity of $\log l$, independent of the number of pairs. The variable $m$ denotes the bit length per segment, typically set to 4. The key steps are as follows:

Input preprocessing (lines 1-4): Each server $S_b$ computes the difference between corresponding shares and separates the most significant bit from the remaining bits. $S_0$ and $S_1$ then construct bit strings $\alpha$ and $\beta$ respectively, and parse them into $m$-bit segments and set $M=2^m$. 
Comparison tree initial (lines 5-7): The servers compute $[\langle 1 \{\alpha_j < \beta_j\} \rangle, \langle 1 \{\alpha_j = \beta_j\} \rangle]_{j=0}^{n \cdot q - 1}$ to construct $n \cdot q$ leaf nodes. We use $\langle lt_{0,j} \rangle_0^B$ and $\langle eq_{0,j} \rangle_0^B$ to represent $\langle 1 \{\alpha_j < \beta_j\} \rangle$ and $\langle 1 \{\alpha_j = \beta_j\} \rangle$, respectively, for each leaf node $j \in [0, n \cdot q - 1]$. For each node, $S_0$ first samples random bits $\langle lt_{0,j} \rangle_0^B, \langle eq_{0,j} \rangle_0^B$. For each value $k \in [0, M - 1]$, $S_0$ computes $s_{j,k} = \langle lt_{0,j} \rangle_0^B \oplus 1 \{\alpha_j < k\}$ and $t_{j,k} = \langle eq_{0,j} \rangle_0^B \oplus 1 \{\alpha_j = k\}$. $S_0$ and $S_1$ then invoke $\left( \substack{M \\ 1} \right) - \mathsf{OT}_2$ with $S_0$ inputting $[s_{j,k} || t_{j,k}]_{k=0}^{M-1}$ and $S_1$ inputting $\beta_j$. As a result, $S_1$ receives $\langle lt_{0,j} \rangle_1^B || \langle eq_{0,j} \rangle_1^B$ as output. 
Comparison tree iteration (lines 8-12): For each layer $i$ and node $j$ of comparison tree, $S_0$ and $S_1$ invoke $\mathcal{F}_{\mathsf{correlated AND}}$ with inputs $\langle lt_{i-1,2j} \rangle^B, \langle eq_{i-1,2j} \rangle^B, \langle lt_{i-1,2j+1} \rangle^B$. They receive outputs $\langle e \rangle^B$ and $\langle f \rangle^B$, where $e = lt_{i-1,2j} \land eq_{i-1,2j+1}$ and $f = eq_{i-1,2j} \land eq_{i-1,2j+1}$. Each server $S_b$ then sets $\langle lt_{i,j} \rangle_b^B = \langle lt_{i-1,2j} \rangle_b^B \oplus \langle e \rangle_b^B$ and $\langle eq_{i,j} \rangle_b^B = \langle f \rangle_b^B$. 
This process iteratively builds the comparison tree, combining results from lower layers to compute higher-layer outcomes. Final output calculation (lines 13-15): Each server $S_b$ computes the $n$ final comparison results in the $\log q$-th layer.

We compare $\mathsf{packedCompare}$ with $\mathcal{F}_{\mathsf{Mill}}$~\cite{asscryptflow2-rathee2020cryptflow2} over \textit{Arithmetic sharing} in Table~\ref{table:theoreticalcompare}. When $l = 32$, for comparing a single pair of arithmetic shares, both ideal functionalities $\mathcal{F}_{\mathsf{Mill}}$~\cite{asscryptflow2-rathee2020cryptflow2} and $\mathcal{F}_{\mathsf{packedCompare}}$ incur a communication overhead of 298 bits. For comparing $n$ pairs, $\mathcal{F}_{\mathsf{packedCompare}}$ achieves a round complexity of 5, significantly reducing the overall round complexity compared to the 32 rounds required by $\mathcal{F}_{\mathsf{Mill}}$.



\begin{algorithm}[!htbp]
\caption{$\mathsf{packedCompare}$} \label{packedCompare}
\begin{algorithmic}[1]
\Statex {\bf Input:} $S_0, S_1$ hold $\langle x_i \rangle, \langle y_i \rangle$ for $i \in [0, n-1]$
\Statex {\bf Output:} $S_0, S_1$ learn $\langle 1\{x_i < y_i\} \rangle^B$ for $i \in [0, n-1]$
\State $S_b$ computes $\langle x_i - y_i \rangle_b$ and parses as $\mathsf{msb}_{i,b} || w_{i,b}$ for $i \in [0, n-1], b\in[0,1]$
\State $S_0$ constructs $\alpha = (0 || 2^{l-1}-1-w_{0,0}) || \cdots || (0 || 2^{l-1}-1-w_{n-1,0})$
\State $S_1$ constructs $\beta = (0 || w_{0,1}) || \cdots || (0 || w_{n-1,1})$
\State Parse $\alpha$ and $\beta$ into $n \cdot q$ $m$-bit segments: $[\alpha_j]_{j=0}^{n \cdot q - 1}$ and $[\beta_j]_{j=0}^{n \cdot q - 1}$, where $q = \frac{l}{m}$
\For{$j \in [0, n \cdot q - 1]$}
\State Compute $\langle lt_{0,j} \rangle^B, \langle eq_{0,j} \rangle^B$ using $\left( \substack{M \\ 1} \right) - \mathsf{OT}_2$
\EndFor
\For{$i \in [1, \log q]$}
\For{$j \in [0, (n \cdot q /2^i) - 1]$}
\State Compute $\langle lt_{i,j} \rangle^B, \langle eq_{i,j} \rangle^B$ using $\mathcal{F}_{\mathsf{correlated AND}}$
\EndFor
\EndFor
\For{$j \in [0, n-1]$}
\State $S_b$ sets $\langle 1 \{x_j < y_j\} \rangle^B_b = \langle lt_{\log q,j} \rangle_b^B \oplus \mathsf{msb}_{j,b} \oplus b$
\EndFor
\end{algorithmic}
\end{algorithm}

\begin{table}[!htbp]
    \caption{The theoretical overhead of $\mathcal{F}_{\mathsf{packedCompare}}$ compared to two other comparison functionalities.}\label{table:theoreticalcompare}
    \begin{center} 
    \begin{adjustbox}{width=0.47\textwidth}
    \begin{tabular}{cccc} 
    \toprule
                                                                                                      & Protocol        & Communication(bits)                   & Round complexity  \\ 
    \hline
    \multirow{3}{*}{\begin{tabular}[c]{@{}c@{}}$1$ pair of $l$-bit \\arithmetic shares\end{tabular}}  & $\mathcal{F}_{\mathsf{Mill}}$~\cite{asscryptflow2-rathee2020cryptflow2}    & $\frac{l}{m}(2^{m+1}+6)-6$    & $\boldsymbol{logl}$         \\
                                                                                                      & $\mathcal{F}_{\mathsf{packedCompare}}$ & $\frac{l}{m}(2^{m+1}+6)-6$    & $\boldsymbol{logl}$         \\
    \hline
    \multirow{3}{*}{\begin{tabular}[c]{@{}c@{}}$n$ pairs of $l$-bit \\arithmetic shares\end{tabular}} & $\mathcal{F}_{\mathsf{Mill}}$~\cite{asscryptflow2-rathee2020cryptflow2}    & $n(\frac{l}{m}(2^{m+1}+6)-6)$ & $nlogl$       \\
                                                                                                      & $\mathcal{F}_{\mathsf{packedCompare}}$ & $n(\frac{l}{m}(2^{m+1}+6)-6)$ & $\boldsymbol{logl}$         \\
    \bottomrule
    \end{tabular}
    \end{adjustbox}
    \end{center}
    \end{table}


\subsection{matrixSharedShuffle}

The $\mathsf{matrixSharedShuffle}$ (Algorithm~\ref{matrixSharedShuffle}) independently shuffles each row of the shared matrix $\langle \boldsymbol{D} \rangle$ in three rounds, with a communication overhead of only $4m^2$ AHE ciphertexts. This protocol serves as a prerequisite for the subsequent $\mathsf{mulRowPartition}$ and $\mathsf{mulRowQuickSelect}$ protocols, designed to prevent the leakage of client indices during the comparison of the distances between revealed updates.

First, $S_1$ encrypts the elements of $\langle \boldsymbol{D} \rangle_1$ using the AHE public key $pk_1$, resulting in $\llbracket \langle \boldsymbol{D} \rangle_{1} \rrbracket_1$, a collection of $m^2$ AHE ciphertexts, which is then sent to $S_0$. Next, $S_0$ decrypts $\llbracket \langle \boldsymbol{D} \rangle_{1} \rrbracket_1$ to obtain $\boldsymbol{D}$ in plaintext form and applies a randomly generated mask matrix $\boldsymbol{L}$. $S_0$ then shuffles the matrices $\llbracket \boldsymbol{L} \rrbracket_0$ and $\llbracket \boldsymbol{D} - \boldsymbol{L} \rrbracket_1$ using the permutation $\boldsymbol{\pi}_0$ via $\mathsf{matrixShuffle}$. Specifically, $\mathsf{matrixShuffle}(\boldsymbol{M}, \boldsymbol{\pi})$ shuffles each row of the input matrix $\boldsymbol{M}$ with $\boldsymbol{M}_{i,j} \gets \boldsymbol{M}_{i, {\pi}_{i,j}}$, for $i,j \in [0, m-1]$. The shuffled matrices $\llbracket \boldsymbol{L}^{\boldsymbol{\pi}_0} \rrbracket_0$ and $\llbracket (\boldsymbol{D} - \boldsymbol{L})^{\boldsymbol{\pi}_0} \rrbracket_1$, with a total size of $2m^2$ AHE ciphertexts, are then sent to $S_1$.

Subsequently, $S_1$ decrypts to obtain the masked matrix $(\boldsymbol{D} - \boldsymbol{L})^{\boldsymbol{\pi}_0}$ in its permuted plaintext form. Since $\boldsymbol{L}$ contains randomly selected elements, $S_1$ gains no information about $\boldsymbol{D}$. Then, $S_1$ applies a new randomly sampled mask matrix $\boldsymbol{R}$ to $(\boldsymbol{D} - \boldsymbol{L})^{\boldsymbol{\pi}_0}$ and permutes it with $\boldsymbol{\pi}_1$, resulting in $\langle \tilde{\boldsymbol{D}} \rangle_1$, which equals $((\boldsymbol{D} - \boldsymbol{L})^{\boldsymbol{\pi}_0} - \boldsymbol{R})^{\boldsymbol{\pi}_1}$. Simultaneously, $S_1$ applies the mask $\boldsymbol{R}$ to the plaintext matrix $\boldsymbol{L}^{\boldsymbol{\pi}_0}$ under the ciphertext $\llbracket \boldsymbol{L}^{\boldsymbol{\pi}_0} \rrbracket_0$. The result $\llbracket \boldsymbol{L}^{\boldsymbol{\pi}_0} + \boldsymbol{R} \rrbracket_0$ is then permuted by $\boldsymbol{\pi}_1$ and sent to $S_0$. Finally, $S_0$ decrypts $\llbracket (\boldsymbol{L}^{\boldsymbol{\pi}_0} + \boldsymbol{R})^{\boldsymbol{\pi}_1} \rrbracket_0$ to obtain the plaintext $(\boldsymbol{L}^{\boldsymbol{\pi}_0} + \boldsymbol{R})^{\boldsymbol{\pi}_1}$, setting this as $\langle \tilde{\boldsymbol{D}} \rangle_0$.

\begin{algorithm}[!htbp]
  \caption{$\mathsf{matrixSharedShuffle}$}\label{matrixSharedShuffle}
  \begin{algorithmic}[1]
  \Statex {\bf Input:} $S_0$ holds $\langle \boldsymbol{ D } \rangle_0$ and $\boldsymbol{\pi}_0$,  $S_1$ holds $\langle \boldsymbol{ D } \rangle_1$ and $\boldsymbol{\pi}_1$, where $\langle \boldsymbol{ D } \rangle = [\langle D_{i,j} \rangle]_{i,j=1}^{m,n}$
  \Statex {\bf Output:} $S_0, S_1$ learn $\langle \tilde{\boldsymbol{D}} \rangle$ satisfying $\langle \tilde{\boldsymbol{D}} \rangle = \langle \boldsymbol{D}^{\boldsymbol{\pi}_0, \boldsymbol{\pi}_1} \rangle$
    \State $\triangleright S_1$:
        \State $\llbracket \langle \boldsymbol{D} \rangle_1 \rrbracket_1 \gets \mathsf{Enc}(\langle \boldsymbol{D} \rangle_1, pk_1)$
        \State Send $\llbracket \langle \boldsymbol{D} \rangle_1 \rrbracket_1$ to $S_0$
    \State $\triangleright S_0$:
        \State $\llbracket \boldsymbol{D} \rrbracket_1 \gets \llbracket \langle \boldsymbol{D} \rangle_1 \rrbracket_1 \boxplus \langle \boldsymbol{D} \rangle_0$
        \State $\boldsymbol{L} \gets [L_{i,j} \in_R  \{0,1\}^{\kappa}]_{i,j=1}^{m,n}$
        \State $\llbracket \boldsymbol{D} - \boldsymbol{L} \rrbracket_1 \gets \llbracket \boldsymbol{D} \rrbracket_1 \boxplus (-\boldsymbol{L})$
        \State $\llbracket (\boldsymbol{D} - \boldsymbol{L})^{\boldsymbol{\pi}_0} \rrbracket_1 \gets \mathsf{matrixShuffle}(\llbracket \boldsymbol{D} - \boldsymbol{L} \rrbracket_1, \boldsymbol{\pi}_0)$
        \State $\llbracket \boldsymbol{L}^{\boldsymbol{\pi}_0} \rrbracket_0 \gets \mathsf{matrixShuffle}(\llbracket \boldsymbol{L} \rrbracket_0, \boldsymbol{\pi}_0)$
        \State Send $\llbracket (\boldsymbol{D} - \boldsymbol{L})^{\boldsymbol{\pi}_0} \rrbracket_1, \llbracket \boldsymbol{L}^{\boldsymbol{\pi}_0} \rrbracket_0$ to $S_1$
    \State $\triangleright S_1$:
        \State $(\boldsymbol{D} - \boldsymbol{L})^{\boldsymbol{\pi}_0} \gets \mathsf{Dec}(\llbracket (\boldsymbol{D} - \boldsymbol{L})^{\boldsymbol{\pi}_0} \rrbracket_1, pk_1)$
        \State $\boldsymbol{R} \gets [R_{i,j} \in_R  \{0,1\}^{\kappa}]_{i,j=1}^{m,n}$
        \State $\llbracket \boldsymbol{L}^{\boldsymbol{\pi}_0}+\boldsymbol{R} \rrbracket_0 \gets \llbracket \boldsymbol{L}^{\boldsymbol{\pi}_0} \rrbracket_0 \boxplus \boldsymbol{R}$
        \State $\langle \tilde{\boldsymbol{D}} \rangle_1 \gets \mathsf{matrixShuffle}((\boldsymbol{D} - \boldsymbol{L})^{\boldsymbol{\pi}_0} - \boldsymbol{R}, \boldsymbol{\pi}_1)$
        \State $\llbracket (\boldsymbol{L}^{\boldsymbol{\pi}_0}+\boldsymbol{R})^{\boldsymbol{\pi}_1} \rrbracket_0 \gets \mathsf{matrixShuffle}(\llbracket \boldsymbol{L}^{\boldsymbol{\pi}_0}+\boldsymbol{R} \rrbracket_0, \boldsymbol{\pi}_1)$
        \State Send $\llbracket (\boldsymbol{L}^{\boldsymbol{\pi}_0}+\boldsymbol{R})^{\boldsymbol{\pi}_1} \rrbracket_0$ to $S_0$
    \State $\triangleright S_0$:
        \State $\langle \tilde{\boldsymbol{D}} \rangle_0 \gets \mathsf{Dec}(\llbracket (\boldsymbol{L}^{\boldsymbol{\pi}_0}+\boldsymbol{R})^{\boldsymbol{\pi}_1} \rrbracket_0, pk_0)$
  \end{algorithmic}
\end{algorithm}



    
  
  

    

\subsection{mulRowPartition}

$\mathsf{mulRowPartition}$ (Algorithm~\ref{mulRowPartition}) is a fundamental component of $\mathsf{mulRowQuickSelect}$. It takes $m$ randomly shuffled shared sequences as input and produces $m$ partitioned sequences along with indices for $m$ pivots.

The protocol $\mathsf{mulRowPartition}$ partitions each sequence by using its last element as a pivot. Line 1 packs $m$ pivots into a set. In lines 4-7, the algorithm appends the sharing of the $j$-th position of $\boldsymbol{A}_i$ and its corresponding pivot to $\alpha$ and $\beta$ respectively. Meanwhile, $\boldsymbol{d}$ records the indices of the rows involved in comparison. In lines 8-16, after revealing the comparison results, the value $\boldsymbol{A}_{d_i,j}$ smaller than the pivot of row $\boldsymbol{A}_{d_i}$ will be swapped to the $t_{d_i}$-th position of $\boldsymbol{A}_{d_i}$. Upon completion, each sequence $\langle \boldsymbol{A}_i \rangle$ is partitioned such that all shared values to the left of $\langle \boldsymbol{A}_{{q}_i}\rangle$ are smaller than it, and those to the right are larger. The input shared sequences may have different lengths, with the complexity of invoking $\mathsf{packedCompare}$ corresponds to the length of the longest sequence.


\begin{algorithm}[!htbp]
  \caption{$\mathsf{mulRowPartition}$}\label{mulRowPartition}
  \begin{algorithmic}[1]
  \Statex {\bf Input:} $S_0, S_1$ hold a shared collection $\langle \boldsymbol{A} \rangle = [\langle \boldsymbol{A}_i \rangle]_{i=0}^{m-1}$
  \Statex {\bf Output:} $S_0, S_1$ learn the index set $\boldsymbol{q} = [q_i]_{i=0}^{m-1}$, and a shared collection $\langle \boldsymbol{A} \rangle = [\langle \boldsymbol{A}_i \rangle]_{i=0}^{m-1}$
    \State $\boldsymbol{p}, \boldsymbol{t} \gets [\langle A_{i,-1} \rangle]_{i=0}^{m-1}, [-1]_m$
    \State $l \gets \max(\{\mathsf{len}(\langle \boldsymbol{A}_i \rangle)\}_{i=0}^{m-1})$
    \For{$j \in [0, l-2]$}
        \State $\boldsymbol{\alpha}, \boldsymbol{\beta}, \boldsymbol{d} \gets \emptyset$
        \For{$i \in [0,m-1]$ \textbf{where} $j < \mathsf{len}(\boldsymbol{A}_i) - 1$}
              \State Append $\langle A_{i,j} \rangle, \langle p_{i} \rangle, i$ to $\boldsymbol{\alpha}, \boldsymbol{\beta}, \boldsymbol{d}$
        \EndFor
        \State $\boldsymbol{c} \gets \mathsf{Reveal}(\mathsf{packedCompare}(\boldsymbol{\alpha}, \boldsymbol{\beta}))$
        \For{$i \in [0, \mathsf{len}(\boldsymbol{c})-1]$ \textbf{where} $c_i == 1$}
              \State $t_{d_i} \gets t_{d_i} + 1$
              \State Swap $\langle A_{d_i,t_{d_i}} \rangle$ and $\langle A_{d_i,j} \rangle$
        \EndFor
        \State $\boldsymbol{q} \gets [t_i + 1]_{i=0}^{m-1}$
        \For{$i \in [0, m-1]$}
            \State Swap $\langle A_{i, q_i} \rangle$ and $\langle A_{i,-1} \rangle$
        \EndFor     
    \EndFor
  \State \Return $\boldsymbol{q}, \langle \boldsymbol{A} \rangle$
  \end{algorithmic}
\end{algorithm}


\subsection{mulRowQuickSelect}

The protocol $\mathsf{mulRowQuickSelect}$ (Algorithm~\ref{mulRowQuickSelect}) takes as input a collection of $m$ randomly and independently shuffled shared sequences $\langle \boldsymbol{A} \rangle$, a target set $\boldsymbol{t} = [t_i]_{i=0}^{m-1}$, and a source index set $\boldsymbol{x} = [x_i]_{i=0}^{m-1}$. where each sequence $\langle \boldsymbol{A}_i \rangle$ may have a different length. After recursive calls to $\mathsf{mulRowQuickSelect}$, the keys of $\bar{\boldsymbol{d}}$ are the row indices, and the values are the $t_i$-th largest elements of $\langle \boldsymbol{A}_i \rangle$.

In the initial global call to $\mathsf{mulRowQuickSelect}$, $\bar{\boldsymbol{d}}$ is an empty dictionary, and $x_i = i$.
 
First, $\mathsf{mulRowQuickSelect}$ calls $\mathsf{mulRowPartition}$ to partition all rows. In loop of lines 4-13, for each partitioned row $\langle \boldsymbol{A}_i \rangle$, $\langle \boldsymbol{L_i} \rangle$ stores the shared values smaller than the pivot, and $\langle \boldsymbol{R}_i \rangle$ stores the pivot and those shared numbers greater than it. Next, if the length of $\langle \boldsymbol{R}_i \rangle$ matches $t_i$ exactly, this pivot is recorded as a value in $\bar{\boldsymbol{d}}$ with the corresponding key as $x_i$. Otherwise, the sequence containing the target value is added to the collection $\langle \boldsymbol{A'} \rangle$, and the new target $\boldsymbol{t'}$, along with their initial row numbers $\boldsymbol{x'}$, are computed and used as inputs to recursively call $\mathsf{mulRowQuickSelect}$. Therefore, with each call to $\mathsf{mulRowQuickSelect}$, the algorithm adds at least one value to the global dictionary $\bar{\boldsymbol{d}}$. Since $\langle \boldsymbol{A}_0 \rangle, \ldots \langle \boldsymbol{A}_{m-1} \rangle$ are all randomly shuffled. The complexity of calling $\mathsf{packedCompare}$ is equal to the length of the longest sequence.


\begin{algorithm}[!htbp]
  \caption{$\mathsf{mulRowQuickSelect}$}\label{mulRowQuickSelect}
  \begin{algorithmic}[1]
  \Statex {\bf Input:} $S_0, S_1$ hold $\langle \boldsymbol{A} \rangle = [\langle \boldsymbol{A}_i \rangle]_{i=0}^{m-1}$, $\boldsymbol{t} = [t_i]_{i=0}^{m-1}$, $\boldsymbol{x} = [x_i]_{i=0}^{m-1}$
  \Statex {\bf Output:} $S_0, S_1$ learn dictionary $\langle \bar{\boldsymbol{d}} \rangle = [\langle \bar{\boldsymbol{d}}_i \rangle]_{i=0}^{m-1}$, where $\bar{\boldsymbol{d}}_i$ is $t_i\text{-th}$ largest of $\boldsymbol{A}_i$
    \State \textbf{global dictionary} $\bar{\boldsymbol{d}}$
    \State $\boldsymbol{p}, \langle \boldsymbol{A} \rangle \gets \mathsf{mulRowPartition}(\langle \boldsymbol{A} \rangle)$
    \State Initialize $\langle \boldsymbol{R} \rangle, \langle \boldsymbol{L} \rangle, \langle \boldsymbol{A'} \rangle, \boldsymbol{t'}, \boldsymbol{x'}$ as empty lists
    
    \For{$i \in [0, m-1]$}
        \State $\langle \boldsymbol{R}_i \rangle, \langle \boldsymbol{L}_i \rangle \gets \mathsf{slice}(\langle \boldsymbol{A}_i \rangle, p_i, -1), \mathsf{slice}(\langle \boldsymbol{A}_i \rangle, 0, p_i-1)$
        \If{$\mathsf{len}(\langle \boldsymbol{R}_i \rangle) == t_i$}
            \State $\langle \bar{\boldsymbol{d}}_{x_i} \rangle \gets \langle A_{i,p_i} \rangle$
        \ElsIf{$\mathsf{len}(\langle \boldsymbol{R}_i \rangle) > t_i$}
            \State Append $\langle \boldsymbol{R}_i \rangle, t_i, x_i$ to $\langle \boldsymbol{A'} \rangle, \boldsymbol{t'}, \boldsymbol{x'}$
        \Else
            \State Append $\langle \boldsymbol{L}_i \rangle, t_i - \mathsf{len}(\langle \boldsymbol{R}_i \rangle), x_i$ to $\langle \boldsymbol{A'} \rangle, \boldsymbol{t'}, \boldsymbol{x'}$
        \EndIf
    \EndFor
    \If{$\boldsymbol{x'} \neq \emptyset$}
        \State $\mathsf{mulRowQuickSelect}(\langle \boldsymbol{A'} \rangle, \boldsymbol{t'}, \boldsymbol{x'})$
    \EndIf
\end{algorithmic}
\end{algorithm}



%
\section{Security Analysis}

\begin{theorem}\label{theorem:packedCompare}
    The security of $\mathsf{packedCompare}$ holds in the $(\left( \substack{M \\ 1} \right) - \mathsf{OT}_2, \mathcal{F}_{\mathsf{correlated AND}})$-hybrid, as $\lbrace\langle lt_{0,j} \rangle^B, \langle eq_{0,j} \rangle^B \rbrace_{j \in [0,n \cdot q -1]}$ are uniformly random.
\end{theorem}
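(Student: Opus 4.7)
The plan is to use the standard ideal/real simulation paradigm in the $\bigl(\left(\substack{M\\1}\right)\text{-}\mathsf{OT}_2,\,\mathcal{F}_{\mathsf{correlated AND}}\bigr)$-hybrid model. Since $\mathsf{packedCompare}$ interacts only through these two hybrid functionalities plus local computation, the argument reduces to showing that the view of a corrupt $S_b$ can be reproduced by a simulator $\mathsf{Sim}_b$ given only $S_b$'s input shares $\langle x_i \rangle_b, \langle y_i \rangle_b$ and the protocol output $\langle 1\{x_i < y_i\}\rangle^B_b$. I would split into two symmetric cases (corrupt $S_0$ and corrupt $S_1$) and construct each simulator explicitly.

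For corrupt $S_0$, the view consists of (i) the bits $\{\langle lt_{0,j}\rangle^B_0, \langle eq_{0,j}\rangle^B_0\}_j$ that $S_0$ samples itself, (ii) the sender inputs $[s_{j,k}\|t_{j,k}]_{k=0}^{M-1}$ into each OT call (for which no output is returned), and (iii) the shares $\langle e\rangle^B_0, \langle f\rangle^B_0$ returned by $\mathcal{F}_{\mathsf{correlated AND}}$ at each tree layer. $\mathsf{Sim}_0$ mirrors the protocol honestly for (i) and (ii) using the known inputs, and samples uniformly random bits for (iii). Because every ingredient is produced identically to the real execution, the simulated and real views are perfectly identical.

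For corrupt $S_1$, the delicate observation is exactly the hypothesis of the theorem: the OT output $\langle lt_{0,j}\rangle^B_1\|\langle eq_{0,j}\rangle^B_1$ equals the XOR of $S_0$'s uniformly sampled masks with the true indicators $1\{\alpha_j<\beta_j\}$ and $1\{\alpha_j=\beta_j\}$. Since the masks are uniform and unknown to $S_1$, each received bit is uniformly distributed and independent of $S_1$'s input. Consequently $\mathsf{Sim}_1$ can replace the OT outputs with fresh uniform bits, and likewise replace $\mathcal{F}_{\mathsf{correlated AND}}$ outputs at every internal layer by uniform shares. The simulator finally programs the share returned at the root layer so that, together with the known final output, reconstruction matches $\langle lt_{\log q, j}\rangle^B_1 \oplus \mathsf{msb}_{j,1} \oplus 1$.

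The main obstacle will be the inductive hand-off across tree layers: I must argue that uniformity at layer $i-1$ propagates to layer $i$ without the simulator ever having to know intermediate bits $lt_{i,j}$ or $eq_{i,j}$. This follows directly from the specification of $\mathcal{F}_{\mathsf{correlated AND}}$, which returns to each party a uniformly random share subject only to correct reconstruction of the AND, so from a single party's local view the returned shares are perfectly uniform regardless of the hidden plaintexts. I would then invoke the sequential composition theorem for the hybrid model to conclude that the composed simulator produces a view identically distributed to the real-world execution, establishing the claimed security of $\mathsf{packedCompare}$.
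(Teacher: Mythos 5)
Your proposal is correct, but it takes a genuinely different route from the paper. The paper's proof is a short inheritance argument: it observes that $\mathsf{packedCompare}$ is structurally identical to the Millionaires' protocol $\mathcal{F}_{\mathsf{Mill}}$ of CryptFlow2 except that lines 2--3 concatenate $n$ $l$-bit integers at the leaf level, so the comparison tree (and its $\log q$ layers) is unchanged and security is simply inherited from the published proof of $\mathcal{F}_{\mathsf{Mill}}$. You instead re-derive that security from first principles: explicit simulators for each corrupted server in the $\bigl(\left(\substack{M\\1}\right)\text{-}\mathsf{OT}_2,\,\mathcal{F}_{\mathsf{correlated AND}}\bigr)$-hybrid, using the uniformity of $S_0$'s sampled leaf masks to argue the OT outputs seen by $S_1$ are uniform, the uniformity of $\mathcal{F}_{\mathsf{correlated AND}}$'s output shares to propagate this layer by layer, and output-programming at the root plus composition to finish. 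Your approach is more self-contained and makes the theorem's stated hypothesis (uniform randomness of $\lbrace\langle lt_{0,j}\rangle^B,\langle eq_{0,j}\rangle^B\rbrace$) do explicit work, which the paper leaves implicit inside the citation; the paper's approach buys brevity and isolates the only delta from a known-secure protocol, namely the leaf packing. One small polish point: in your corrupt-$S_0$ case you should also program the last-layer correlated-AND share for consistency with the given output share, exactly as you already do for $S_1$; with that symmetry noted, your argument is complete.
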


\begin{proof}

    The security of $\mathsf{packedCompare}$ relies on the secure comparison protocol $\mathcal{F}_{\mathsf{Mill}}$ \cite{asscryptflow2-rathee2020cryptflow2}, which compares $l$-bit unsigned integers by parsing them into $m$-bit segments. These segments are treated as leaf nodes in a comparison tree. We extend this approach by concatenating $n$ $l$-bit unsigned integers at the leaf node level (lines 2-3 of Algorithm \ref{packedCompare}), maintaining the tree structure and number of layers ($\log q$). Since $\mathsf{packedCompare}$ inherits the security of $\mathcal{F}_{\mathsf{Mill}}$, the security of $\mathsf{packedCompare}$ is preserved.
\end{proof}

\begin{theorem}\label{theorem:matrixSharedShuffle}

   The protocol $\mathsf{matrixSharedShuffle}$ securely shuffles the shared matrix row-wise under the threat of an honest-but-curious adversary $\mathcal{A}_h$, provided that AHE is semantically secure.
\end{theorem}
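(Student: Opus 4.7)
The plan is to apply the ideal-world/real-world simulation paradigm fixed in Section~\ref{sec:threatmodel}, constructing a PPT simulator $\mathsf{Sim}_b$ for each possible corruption $S_b\in\{S_0,S_1\}$ that $\mathcal{A}_h$ may choose. In each case the goal is to show that, given only the corrupt server's input $(\langle\boldsymbol{D}\rangle_b,\boldsymbol{\pi}_b)$ and the ideal output $\langle\tilde{\boldsymbol{D}}\rangle_b$, one can produce a transcript computationally indistinguishable from the one seen in the real execution. The two cases reduce to (i) the semantic security of the AHE scheme for ciphertexts encrypted under the non-corrupt party's key, and (ii) the fact that a fresh additive mask turns each transmitted plaintext into a uniform matrix that is independent of the honest party's permutation.

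\textbf{Case $S_0$ corrupted.} The view beyond $S_0$'s own data consists of two incoming ciphertexts: $\llbracket\langle\boldsymbol{D}\rangle_1\rrbracket_1$, which $S_0$ cannot decrypt, and $\llbracket(\boldsymbol{L}^{\boldsymbol{\pi}_0}+\boldsymbol{R})^{\boldsymbol{\pi}_1}\rrbracket_0$, whose plaintext $S_0$ recovers as $\langle\tilde{\boldsymbol{D}}\rangle_0$. $\mathsf{Sim}_0$ generates a fresh $(pk_1,sk_1)$, sends $\mathsf{Enc}(\boldsymbol{0},pk_1)$ in place of the first ciphertext (indistinguishability follows from AHE semantic security under a key whose secret $S_0$ does not hold), and encrypts the ideal output $\langle\tilde{\boldsymbol{D}}\rangle_0$ under $pk_0$ in place of the second. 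Correctness of the plaintext distribution follows because the mask $\boldsymbol{R}\in_R\{0,1\}^{\kappa}$ is sampled by $S_1$ and never revealed to $S_0$, so $\boldsymbol{L}^{\boldsymbol{\pi}_0}+\boldsymbol{R}$ is uniform in $S_0$'s conditional distribution; the subsequent row-wise permutation by $\boldsymbol{\pi}_1$ preserves uniformity and, crucially, yields a plaintext that carries no information about $\boldsymbol{\pi}_1$.

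\textbf{Case $S_1$ corrupted.} Symmetrically, $\mathsf{Sim}_1$ must reproduce the two ciphertexts $S_0$ sends in the second round. For $\llbracket\boldsymbol{L}^{\boldsymbol{\pi}_0}\rrbracket_0$, the simulator returns $\mathsf{Enc}(\boldsymbol{0},pk_0)$, invoking AHE semantic security under $pk_0$ (whose secret $S_1$ does not hold). For $\llbracket(\boldsymbol{D}-\boldsymbol{L})^{\boldsymbol{\pi}_0}\rrbracket_1$, which $S_1$ does decrypt, the simulator samples a uniform matrix $\boldsymbol{U}$ and encrypts it under $pk_1$; since $\boldsymbol{L}\in_R\{0,1\}^{\kappa}$ is an independent additive mask unknown to $S_1$, the plaintext $\boldsymbol{D}-\boldsymbol{L}$ is uniformly distributed, and its row-wise permutation by $\boldsymbol{\pi}_0$ remains uniform and independent of $\boldsymbol{\pi}_0$. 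A standard hybrid argument combining the two replacements then establishes that $\mathsf{Sim}_1$'s transcript is computationally indistinguishable from the real view.

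\textbf{Anticipated obstacle.} The delicate step will be formalising the claim that ``additive mask then row-wise permutation'' strips every residual bit of information about the hidden permutation: one must verify that the transmitted masked plaintext, the corresponding masking ciphertext kept hidden under the other party's key, and the corrupt party's local share are jointly independent of the honest server's permutation, so that the two hybrids can be sequenced without correlations leaking through. Correctness of the protocol, namely $\langle\tilde{\boldsymbol{D}}\rangle_0+\langle\tilde{\boldsymbol{D}}\rangle_1\equiv\boldsymbol{D}^{\boldsymbol{\pi}_0,\boldsymbol{\pi}_1}\pmod{2^{\kappa}}$, follows by direct algebraic expansion of lines 15 and 18 of Algorithm~\ref{matrixSharedShuffle} and is orthogonal to the simulation argument.
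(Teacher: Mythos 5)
Your overall strategy is the same as the paper's: a two-case simulation in the ideal/real paradigm, replacing the ciphertext the corrupt server cannot decrypt with $\mathsf{Enc}(\boldsymbol{0},\cdot)$ (semantic security of AHE), and arguing that the one plaintext the corrupt server does recover is rendered uniform and permutation-independent by the honest server's fresh additive mask. Your Case $S_0$ matches the paper's Case 1 essentially verbatim, including the crucial step of encrypting the \emph{ideal output} $\langle\tilde{\boldsymbol{D}}\rangle_0$ so that the simulated view is consistent with what the functionality hands $S_0$.

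There is, however, a genuine slip in your Case $S_1$. You have the simulator sample a \emph{fresh, independent} uniform matrix $\boldsymbol{U}$ and encrypt it under $pk_1$ in place of $\llbracket(\boldsymbol{D}-\boldsymbol{L})^{\boldsymbol{\pi}_0}\rrbracket_1$. But the functionality here is randomized (it outputs fresh shares), so security requires indistinguishability of the \emph{joint} distribution of the corrupt party's view together with both parties' outputs, not of the view alone. In the real execution $S_1$'s output is a deterministic function of its view: $\langle\tilde{\boldsymbol{D}}\rangle_1 = \bigl((\boldsymbol{D}-\boldsymbol{L})^{\boldsymbol{\pi}_0}-\boldsymbol{R}\bigr)^{\boldsymbol{\pi}_1}$. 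If $\boldsymbol{U}$ is sampled independently of the ideal-world output $\langle\tilde{\boldsymbol{D}}\rangle_1$ delivered by $\mathcal{F}_{\mathsf{matrixSharedShuffle}}$, then the check $\langle\tilde{\boldsymbol{D}}\rangle_1 \stackrel{?}{=} (\boldsymbol{U}-\boldsymbol{R})^{\boldsymbol{\pi}_1}$ holds with probability $1$ in the real world and with negligible probability in your ideal world, so a distinguisher seeing view and outputs wins almost surely. The fix is one line: the simulator must derive the plaintext from the ideal output it is given, setting $\boldsymbol{U} = \bigl(\langle\tilde{\boldsymbol{D}}\rangle_1\bigr)^{\boldsymbol{\pi}_1^{-1}} + \boldsymbol{R}$ (which is still uniform and independent of $\boldsymbol{\pi}_0$, by the same masking argument you give). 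This is exactly what the paper's proof arranges by having $\mathsf{Sim}$ forward $S_1$'s input and randomness to the functionality and encrypt the returned matrix $\boldsymbol{x}=(\boldsymbol{D}-\boldsymbol{L})^{\boldsymbol{\pi}_0}$, rather than inventing an unrelated one. Note the asymmetry with your own Case $S_0$, where you did tie the simulated plaintext to the ideal output; carrying that same discipline into Case $S_1$ closes the gap.
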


\begin{proof}

Using the ideal/real-world paradigm with functionality $\mathcal{F}_{\mathsf{matrixSharedShuffle}}$, we analyze the security when adversary $\mathcal{A}_h$ compromises $S_0$ or $S_1$, aiming for computational indistinguishability.

\textit{Case 1 (Compromised $S_0$)}:  
Simulator $\mathsf{Sim}$:
\begin{itemize}
    \item Receives $\langle \boldsymbol{D} \rangle_{0}$, $\boldsymbol{\pi}_0$, $\boldsymbol{L}$ from $\mathsf{Env}$, forwards them to $\mathcal{F}_{\mathsf{matrixSharedShuffle}}$ to get $\boldsymbol{w}$, $\boldsymbol{z}$.
    \item Constructs $\boldsymbol{w}' = \mathsf{Enc}(\boldsymbol{0}, pk_1)$ and $\boldsymbol{z}' = \mathsf{Enc}(\boldsymbol{z}, pk_1)$; sends them to $S_0$.
    \item Outputs everything $S_0$ outputs.
\end{itemize}

The simulated view for $S_0$ is indistinguishable from the real view due to the semantic security of AHE. Even with $\boldsymbol{L}^{\boldsymbol{\pi}_0}$ known, $S_0$ cannot deduce $\boldsymbol{\pi}_1$ from $\boldsymbol{z}$ due to randomness in $\boldsymbol{R}$.

\textit{Case 2 (Compromised $S_1$)}:  
Simulator $\mathsf{Sim}$:
\begin{itemize}
    \item Receives $\langle \boldsymbol{D} \rangle_{1}$, $\boldsymbol{\pi}_1$, $\boldsymbol{R}$ from $\mathsf{Env}$, forwards them to $\mathcal{F}_{\mathsf{matrixSharedShuffle}}$ to get $\boldsymbol{x}$, $\boldsymbol{y}$.
    \item Constructs $\boldsymbol{x}' = \mathsf{Enc}(\boldsymbol{x}, pk_1)$ and $\boldsymbol{y}' = \mathsf{Enc}(\boldsymbol{0}, pk_0)$; sends them to $S_1$.
    \item Outputs everything $S_1$ outputs.
\end{itemize}

The simulated view for $S_1$ is indistinguishable from the real view because $\boldsymbol{y}' = \llbracket 0 \rrbracket_0$ is indistinguishable from $\boldsymbol{y}^{\text{real}} = \llbracket \langle \boldsymbol{D} \rangle_0 \rrbracket_0$. Even with $sk_1$, $\boldsymbol{x} = (\boldsymbol{D} - \boldsymbol{L})^{\boldsymbol{\pi}_0}$ reveals nothing about $\boldsymbol{\pi}_0$ due to randomness in $\boldsymbol{L}$.

In both cases, the ideal-world output distribution is computationally indistinguishable from the real-world output, completing the proof.

\end{proof}

\begin{theorem}\label{theorem:mulRowPartition}

    The security of $\mathsf{mulRowPartition}$ holds in the $(\mathcal{F}_{\mathsf{packedCompare}}, \mathcal{F}_{\mathsf{matrixSharedShuffle}})$-hybrid.
    
\end{theorem}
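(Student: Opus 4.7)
The plan is to establish security in the ideal/real-world simulation paradigm against an honest-but-curious adversary $\mathcal{A}_h$ who compromises exactly one of $\{S_0, S_1\}$, say $S_b$. I would first define an ideal functionality $\mathcal{F}_{\mathsf{mulRowPartition}}$ that, given the two input shares of $\langle \boldsymbol{A} \rangle$, returns the public index set $\boldsymbol{q}$ together with fresh shares of the partitioned collection. The task of the simulator $\mathsf{Sim}$ is to produce a transcript that is computationally indistinguishable from the real execution using only $\mathcal{A}_h$'s input share $\langle \boldsymbol{A} \rangle_b$ and the functionality's outputs $(\boldsymbol{q}, \langle \boldsymbol{A}^{\mathrm{out}} \rangle_b)$.

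The only information that reaches $\mathcal{A}_h$ beyond its local computation is (i) the outputs of the $\mathcal{F}_{\mathsf{packedCompare}}$ calls in line 8 and (ii) the revealed comparison vectors $\boldsymbol{c}$ in the same line. In the hybrid model the former are fresh uniformly random shares, which $\mathsf{Sim}$ can sample trivially; therefore the crux is faithfully simulating the plaintext bit-vectors $\boldsymbol{c}$ across iterations $j = 0, \dots, l-2$. My plan is to argue that, conditioned on $\boldsymbol{q}$ and on the publicly known sequence lengths $\mathsf{len}(\boldsymbol{A}_i)$, the joint distribution of these revealed bits is uniform over all bit-strings in which row $i$ contributes exactly $q_i$ ones among its $\mathsf{len}(\boldsymbol{A}_i) - 1$ non-pivot positions. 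The simulator samples such a random pattern per row (from the known $q_i$) and then dispatches the bits across iterations in lockstep with the real protocol's control flow, which depends only on public lengths.

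The indistinguishability step is a short hybrid argument. Let $H_0$ be the real execution, $H_1$ replace $\mathcal{F}_{\mathsf{packedCompare}}$ outputs by fresh random shares (identical distribution by the hybrid assumption), and $H_2$ be the simulated execution described above. The equivalence $H_1 \equiv H_2$ hinges on the fact that the input $\langle \boldsymbol{A} \rangle$ is produced by $\mathcal{F}_{\mathsf{matrixSharedShuffle}}$, so each row has been independently and uniformly permuted (with the pivot position preserved as the last element by convention). This makes the positions at which the pivot is strictly greater than a given non-pivot element uniformly distributed, matching $\mathsf{Sim}$'s sampling exactly. Correctness of the returned shares follows because all $\mathsf{Swap}$ operations are deterministic public reorderings of share indices, so the output shares agree with those produced by $\mathcal{F}_{\mathsf{mulRowPartition}}$.

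The main obstacle is justifying that the revealed bits $\boldsymbol{c}$ leak \emph{nothing beyond} $\boldsymbol{q}$ and the public lengths. Concretely, one must ensure that across the nested iteration over $j$, the pattern of ones relative to the shuffled positions does not expose residual structure of the original pre-shuffled sequence. I would address this by invoking the security guarantee of Theorem~\ref{theorem:matrixSharedShuffle}: under $\mathcal{F}_{\mathsf{matrixSharedShuffle}}$, the row permutation is information-theoretically hidden from $\mathcal{A}_h$, so any function of the revealed bits can be re-randomized by composition with a uniformly random per-row permutation, collapsing the leakage to the per-row count $q_i$. Combining this with Theorem~\ref{theorem:packedCompare} for the comparison calls completes the security reduction.
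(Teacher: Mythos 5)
Your proposal is correct and follows essentially the same route as the paper's own proof: a hybrid argument in the $(\mathcal{F}_{\mathsf{packedCompare}}, \mathcal{F}_{\mathsf{matrixSharedShuffle}})$-hybrid, where the prior independent row-wise shuffle is precisely what renders the revealed comparison vectors $\boldsymbol{c}$ harmless. If anything, your treatment is sharper at the crucial step: the paper merely asserts that $\boldsymbol{q}'$ and the adjusted transcript are ``independent of the original values'' and lets the simulator emit random values, whereas you pin down the exact conditional distribution of the revealed bits (uniform over bit-strings with $q_i$ ones among the $\mathsf{len}(\boldsymbol{A}_i)-1$ non-pivot positions, given the declared leakage $\boldsymbol{q}$ and public lengths) and have the simulator sample from it in lockstep with the public control flow, which is what a fully specified simulation actually requires.
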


\begin{proof}

We define the following hybrids to show indistinguishability between the simulated and real-world views of $\mathsf{mulRowPartition}$ when $S_1$ is compromised.
\textbf{Hybrid $\mathcal{H}_0$}: Real-world execution.

\textbf{Hybrid $\mathcal{H}_1$}: Same as $\mathcal{H}_0$, except $S_1$ uses a random permutation $\boldsymbol{\pi}_1'$ independent of $\boldsymbol{\pi}_0$ and invokes $\mathcal{F}_{\mathsf{matrixSharedShuffle}}$ to obtain $\langle \tilde{\boldsymbol{D}}' \rangle = \langle \boldsymbol{D}^{\boldsymbol{\pi}_0, \boldsymbol{\pi}_1'} \rangle$.

\textbf{Hybrid $\mathcal{H}_2$}: Same as $\mathcal{H}_1$, except during $l-1$ iterations, $S_0$ and $S_1$ invoke $\mathcal{F}_{\mathsf{packedCompare}}$ to obtain $\langle \boldsymbol{c}' \rangle$. $S_1$ sends $\langle \boldsymbol{c}' \rangle_1$ to $S_0$, which adjusts $\langle \tilde{\boldsymbol{D}}' \rangle$ and $\boldsymbol{q}'$ based on $\boldsymbol{c}'$.

\textbf{Hybrid $\mathcal{H}_3$}: Ideal-world execution. Simulator $\mathsf{Sim}$ sends random $\boldsymbol{q}''$ and $\langle \tilde{\boldsymbol{D}}'' \rangle$ to $S_0$.

\textbf{Analysis}: 
\begin{itemize}
    \item \emph{Indistinguishability of $\mathcal{H}_0$ and $\mathcal{H}_1$}: By Theorem \ref{theorem:matrixSharedShuffle}, $\langle \tilde{\boldsymbol{D}}' \rangle_0$ and $\langle \tilde{\boldsymbol{D}}' \rangle_1$ are indistinguishable from the real-world view.
    \item \emph{Indistinguishability of $\mathcal{H}_1$ and $\mathcal{H}_2$}: By Theorem \ref{theorem:packedCompare}, $\mathcal{F}_{\mathsf{packedCompare}}$ does not leak original values. Random permutations ensure that $S_0$ cannot deduce client indices from $\langle \boldsymbol{c}' \rangle$.
    \item \emph{Indistinguishability of $\mathcal{H}_2$ and $\mathcal{H}_3$}: In $\mathcal{H}_2$, $\boldsymbol{q}'$ and $\langle \tilde{\boldsymbol{D}}' \rangle$ are independent of the original values. Thus, they are indistinguishable from the random values in $\mathcal{H}_3$.
\end{itemize}
By transitivity, the simulated view is indistinguishable from the real-world view, completing the proof.

\end{proof}

\begin{theorem}\label{theorem:mulRowQuickSelect}
    The security of $\mathsf{mulRowQuickSelect}$ holds trivially in the $(\mathcal{F}_{\mathsf{packedCompare}}, \mathcal{F}_{\mathsf{mulRowPartition}})$-hybrid.
\end{theorem}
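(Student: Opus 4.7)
The plan is to invoke the standard ideal/real-world simulation paradigm in the $(\mathcal{F}_{\mathsf{packedCompare}}, \mathcal{F}_{\mathsf{mulRowPartition}})$-hybrid model. I would construct a simulator $\mathsf{Sim}$ that, given only the view of the compromised server (either $S_0$ or $S_1$), produces a transcript computationally indistinguishable from the real execution of Algorithm~\ref{mulRowQuickSelect}.

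The observation that justifies the word \emph{trivially} in the statement is that $\mathsf{mulRowQuickSelect}$ performs no cryptographic step of its own. The only non-local operation it executes is the call to $\mathcal{F}_{\mathsf{mulRowPartition}}$ on line~2; everything in lines~3--13 is strictly local. Specifically, the slicing $\mathsf{slice}(\langle \boldsymbol{A}_i \rangle, p_i, -1)$, the appending of $\langle \boldsymbol{R}_i \rangle$ or $\langle \boldsymbol{L}_i \rangle$ to $\langle \boldsymbol{A}' \rangle$, and the assignment $\langle \bar{\boldsymbol{d}}_{x_i} \rangle \gets \langle A_{i,p_i} \rangle$ all manipulate secret shares without communication; the comparisons $\mathsf{len}(\langle \boldsymbol{R}_i \rangle) = t_i$ and $\mathsf{len}(\langle \boldsymbol{R}_i \rangle) > t_i$ act only on the public integer pivot vector $\boldsymbol{q}$ and the public targets $\boldsymbol{t}$ and $\boldsymbol{x}$.

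With this in hand, the simulator is immediate: on invocation, $\mathsf{Sim}$ forwards the compromised server's input to $\mathcal{F}_{\mathsf{mulRowPartition}}$ to obtain $(\boldsymbol{q}, \langle \boldsymbol{A} \rangle)$, then locally executes lines~3--13 exactly as the honest server would, and finally invokes itself recursively on $(\langle \boldsymbol{A}' \rangle, \boldsymbol{t}', \boldsymbol{x}')$. Since every inter-server message originates from or terminates at $\mathcal{F}_{\mathsf{mulRowPartition}}$, and Theorem~\ref{theorem:mulRowPartition} has already established the indistinguishability of its real and ideal views, the simulated transcript matches the real transcript verbatim inside the hybrid.

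The only subtlety worth ruling out --- and the main (mild) obstacle --- is that the recursion might accumulate leakage across its depth: the lengths passed into each subsequent call, and the branching pattern that determines when $\bar{\boldsymbol{d}}$ receives a value, are functions of pivot ranks produced at earlier levels. I would discharge this with a short hybrid argument over recursion depth, observing that at each level the inputs to $\mathcal{F}_{\mathsf{mulRowPartition}}$ are obtained by public slicing of its previous shuffled output, so the universal composability of the hybrid functionalities applies without modification. Since every randomness revealed to the servers either lives inside an ideal call or is a function of a uniformly shuffled rank, no extra information leaks, and the security of $\mathsf{mulRowQuickSelect}$ reduces cleanly to that of its subroutines.
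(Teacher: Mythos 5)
Your proposal is correct and rests on the same foundation as the paper's proof: simulation-based security in the hybrid model, discharged by reducing every inter-server interaction to the already-proven subprotocol functionalities. The difference is organizational. The paper writes a flat chain of hybrids $\mathcal{H}_0 \rightarrow \mathcal{H}_3$ in which it first replaces the \emph{upstream} shuffle with $\mathcal{F}_{\mathsf{matrixSharedShuffle}}$ (invoking Theorem~\ref{theorem:matrixSharedShuffle}), then replaces the up-to-$(m-1)$ partition calls with $\mathcal{F}_{\mathsf{mulRowPartition}}$ (invoking Theorem~\ref{theorem:mulRowPartition}), and finally argues the resulting output shares are uniform and hence match the ideal world; notably, it folds the framework-level shuffle into the proof even though $\mathsf{matrixSharedShuffle}$ is not among the hybrid functionalities named in the theorem. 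You instead keep the shuffledness of the input as a precondition (which matches how Algorithm~\ref{mulRowQuickSelect} is specified), build an explicitly recursive simulator, and induct over recursion depth, arguing that slicing a uniformly shuffled row at its pivot leaves each subsequence uniformly shuffled, so the revealed branching pattern and sequence lengths are functions of uniformly random ranks. Your treatment is more faithful to the literal hybrid statement and is actually more explicit than the paper about the one real subtlety --- that the public pivot indices and the recursion's branching structure do not accumulate leakage --- which the paper's analysis of $\mathcal{H}_2$ passes over with the bare assertion that the outputs are ``independent of the original values.'' What the paper's version buys in exchange is a proof that directly reflects how $\mathsf{mulRowQuickSelect}$ is consumed inside FLURP, where the shuffled input is produced by $\mathsf{matrixSharedShuffle}$ rather than assumed.
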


\begin{proof}
We use a hybrid argument to prove that the simulated view of $\mathsf{mulRowQuickSelect}$ is indistinguishable from the real-world view.

\textbf{Hybrid $\mathcal{H}_0$}: Real-world execution of $\mathsf{mulRowQuickSelect}$.


\textbf{Hybrid $\mathcal{H}_1$}: Identical to $\mathcal{H}_0$, except $S_0$ inputs $\langle \boldsymbol{D} \rangle_{0}$ and $\boldsymbol{\pi}_0$, and $S_1$ inputs $\langle \boldsymbol{D} \rangle_{1}$ and a randomly generated $\boldsymbol{\pi}_1'$ independent of $\boldsymbol{\pi}_0$.
Both parties invoke $\mathcal{F}_{\mathsf{matrixSharedShuffle}}$, and the output $\langle \tilde{\boldsymbol{D}}' \rangle = \langle \boldsymbol{D}^{\boldsymbol{\pi}_0, \boldsymbol{\pi}_1'} \rangle$ is added to the simulated view.

\textbf{Hybrid $\mathcal{H}_2$}: Same as $\mathcal{H}_1$, except $S_0$ and $S_1$ invoke $\mathcal{F}_{\mathsf{mulRowPartition}}$ in up to $m-1$ iterations.

\textbf{Hybrid $\mathcal{H}_3$}: Ideal-world execution. Simulator $\mathsf{Sim}$ sends random $\langle \bar{\boldsymbol{d}}'' \rangle_0$ to $S_0$.

\textbf{Analysis}: 
\begin{itemize}
    \item \emph{Indistinguishability of $\mathcal{H}_0$ and $\mathcal{H}_1$}: By Theorem \ref{theorem:matrixSharedShuffle}, the outputs in $\mathcal{H}_1$ are indistinguishable from the real-world view.
    \item \emph{Indistinguishability of $\mathcal{H}_1$ and $\mathcal{H}_2$}: By Theorem \ref{theorem:mulRowPartition}, the outputs in $\mathcal{H}_2$ are indistinguishable from those in $\mathcal{H}_1$.
    \item \emph{Indistinguishability of $\mathcal{H}_2$ and $\mathcal{H}_3$}: In $\mathcal{H}_2$, the output $\langle \bar{\boldsymbol{d}}' \rangle_0$ is uniformly random and independent of the real-world inputs. Thus, it matches the random output in $\mathcal{H}_3$.
\end{itemize}
By transitivity, the simulated view is indistinguishable from the real-world view, completing the proof.
\end{proof}

\section{Evaluation}

\subsection{Experiment Settings}\label{sec:modelsetting}

\textbf{Environment.} The experiments are conducted on Ubuntu 20.04 with an AMD 3960X 24-core CPU, an NVIDIA GTX 3090 24GB GPU, and 64GB RAM. We use PyTorch for training, the Python PHE library for AHE implementation, and C++ for SMPC protocols involving secret sharing.

\textbf{Tasks.} We perform the following tasks: (1) Train on 50,000 images and test on 10,000 from \textbf{CIFAR-10} \cite{cifar}, which contains 60,000 $32 \times 32$ color images across 10 categories, using the \textbf{ResNet10} model with 4.9M parameters. (2) Train on 12,480 images and test on 3,120 from \textbf{ImageNet-12} \cite{li2023reconstructive}, a subset of ImageNet~\cite{deng2009imagenet} with $224 \times 224$ color images across 12 classes, using the \textbf{MobileNet-V2}~\cite{sandler2018mobilenetv2} model with 2.2M parameters. (3) Evaluate \textbf{AgNews} \cite{zhang2015character}, containing 120,000 training and 7,600 testing samples of news articles in four categories (World, Sports, Business, Science/Technology), using a \textbf{Bi-LSTM} model with 4.6M parameters, pretrained BERT tokenizer, 128-dimensional embeddings, 256 hidden units per direction, and a 4-class output layer for classification.

\textbf{Clients.} We configure training parameters in Table~\ref{tab:trainingpara}. Forty percent of clients are malicious, capable of performing one of the eight attacks described in Section~\ref{sec:byzantine}. The training set is divided under both IID and non-IID conditions.

\begin{table}[h]
\centering
\caption{Training configuration for three tasks}\label{tab:trainingpara}
\begin{threeparttable}
\begin{tabular}{@{}lccc@{}}
\toprule
Parameter            & Task 1 & Task 2 & Task 3 \\ \midrule
\# of clients    & 20              & 10              & 20              \\
Learning rate  & 0.1             & 0.1             & 0.1             \\
Optimizer            & SGD ($t=0.9$)     & SGD ($t=0.9$)     & SGD ($t=0.9$)     \\
Batch size           & 128             & 128             & 128             \\
Local epochs         & 10              & 3               & 1               \\
Global rounds        & 200             & 50              & 30              \\ \bottomrule
\end{tabular}
\begin{tablenotes}
    \footnotesize
    \item Note: $t$ denotes the momentum value used in the SGD optimizer.
\end{tablenotes}
\end{threeparttable}
\end{table}

\textbf{Evaluation metrics.} Let $\mathcal{D}_g$ denote the global test set. For untargeted attacks, we use Main Accuracy (MA), equivalent to Top-1 Accuracy, evaluated on the global model using $\mathcal{D}_g$. For targeted attacks, we employ Attack Success Rate (ASR) using test data $\mathcal{D}_t$, created by removing target label samples from $\mathcal{D}_g$ and inserting triggers into the remaining samples. ASR is the proportion of $\mathcal{D}_t$ samples classified as the target class by the global model. Higher MA and lower ASR indicate better defense effectiveness.

\subsection{Comparison of Sampling Methods} \label{sec:comparesampling}

\begin{figure}[!htbp]
\centering
\includegraphics[width=0.48\textwidth]{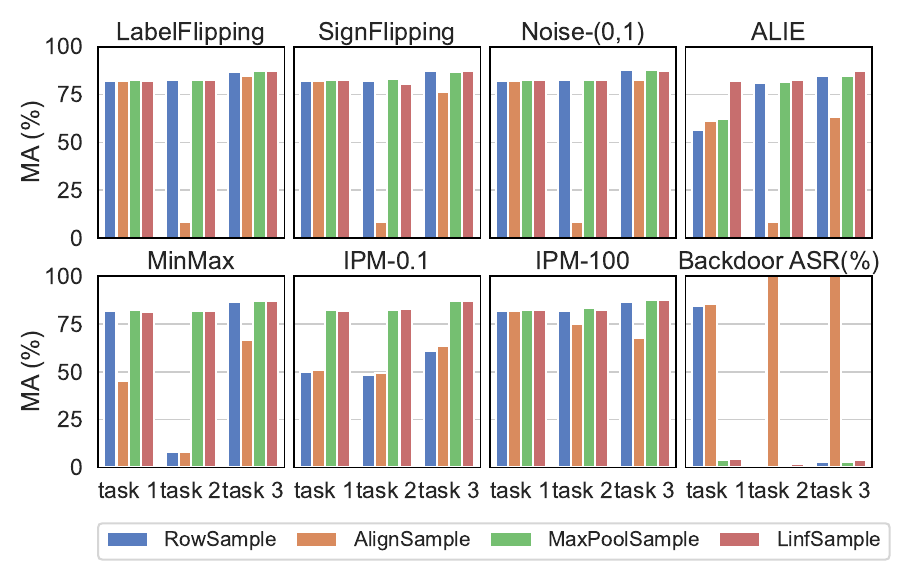}
\caption{The MA and ASR (\%) of the three baselines and $\mathsf{LinfSample}$ across three training tasks.}
\label{fig:sample-methods-resluts}
\end{figure}

To demonstrate the superior resistance of $\mathsf{LinfSample}$ against Byzantine attacks and its ability to reduce SMPC overhead in computing the shared SED matrix, we apply the same proximity-based defense to evaluate four different sampling methods. The configurations for these methods are as follows:

(i) $\mathsf{RowSample}$: No sampling is performed. Servers compute the shared SED matrix on updates. (ii) $\mathsf{AlignSample}$~\cite{RFBDS-chen2023privacy}: For each layer of the update, the maximum absolute value is determined first, and all entries are aligned with this value while retaining their original signs. (iii) $\mathsf{MaxPoolSample}$~\cite{agramplifier-gong2023agramplifier}: The update is reshaped into a matrix and subjected to 2D max-pooling with a kernel size of $5 \times 5$. (iv) $\mathsf{LinfSample}$: The update is computed as described in Section~\ref{subsec:linfsample}, with the window size set to $4096$. In (ii), (iii), and (iv), servers compute the shared SED matrix based on the sampling results. 

\subsubsection{Byzantine resilience} The results in Figure~\ref{fig:sample-methods-resluts} show that for $\mathsf{RowSample}$, due to the fact that SED is computed based on the full-size updates, its performance is excellent under most attacks. However, it fails in Task 1 of ALIE and Task 2 of MinMax. $\mathsf{AlignSample}$ performs poorly in most cases, with only slight improvement in defense under the IMP-100 attack, which amplifies updates on a large scale. Notably, the IPM-0.1 attack reduces the global model accuracy for both $\mathsf{RowSample}$ and $\mathsf{AlignSample}$. $\mathsf{MaxPoolSample}$ performs slightly worse than $\mathsf{LinfSample}$ only in Task 1 of ALIE. The $\mathsf{LinfSample}$ method we designed performs optimally against most attacks, except for a slight inferiority to $\mathsf{MaxPoolSample}$ in Task 2 of the SignFlipping attack.

\subsubsection{Communication of client uploads for Sharing}\label{commofclient}

The $\mathsf{RowSample}$ method requires only the splitting of updates, resulting in the lowest overhead for client-side sharing. In contrast, the $\mathsf{LinfSample}$, $\mathsf{AlignSample}$, and $\mathsf{MaxPoolSample}$ methods involve splitting both updates and the sampling results. This additional step increases the overhead, which depends on the size of the sampling results. As shown in Table~\ref{table:LinfSample}, the sampling results for $\mathsf{LinfSample}$ are the smallest among the three methods, making its additional overhead negligible by comparison.

\subsubsection{Overhead of computing shared SED matrix}

\begin{table}[!htbp]
\centering
\caption{Overhead comparison of $\mathsf{LinfSample}$ and three other sampling methods per round.}
\label{table:LinfSample}
\begin{adjustbox}{width=0.47\textwidth}
\begin{tabular}{ccccc} 
\toprule
\multirow{2}{*}{Dataset} & $\mathsf{RowSample}$ & $\mathsf{AlignSample}$ & $\mathsf{MaxPoolSample}$ & $\mathsf{LinfSample}$  \\ 
\cline{2-5}
                         & \multicolumn{4}{c}{Communication of sharing for each client (MB)}                                 \\ 
\hline
CIFAR-10                 & \textbf{18.8}        & 37.5                   & 19.5                     & 18.8                   \\
ImageNet-12              & \textbf{5.7}         & 11.3                   & 5.9                      & 5.7                    \\
AgNews                   & \textbf{0.6}         & 1.1                    & 0.6                      & 0.6                    \\ 
\hline
                         & \multicolumn{4}{c}{Communication of computing shared SED matrix for servers (MB)}                 \\ 
\hline
CIFAR-10                 & 14215.3              & 14215.3                & 568.6                    & \textbf{3.5}           \\
ImageNet-12              & 1561.0               & 1561.0                 & 62.4                     & \textbf{0.4}           \\
AgNews                   & 13624.3              & 13624.3                & 545.0                    & \textbf{3.3}           \\ 
\hline
                         & \multicolumn{4}{c}{Runtime of computing shared SED matrix for servers (s)}                        \\ 
\hline
CIFAR-10                 & 726.5                & 726.5                  & 30.2                     & \textbf{0.4}           \\
ImageNet-12              & ~80.1                & 80.1                   & 3.3                      & \textbf{0.1}           \\
AgNews                   & 685.3                & 685.3                  & 27.8                     & \textbf{0.4}           \\
\bottomrule
\end{tabular}
\end{adjustbox}
\end{table}

We further evaluate the communication and time overhead of computing the SED matrix for the four sampling methods, as shown in Table~\ref{table:LinfSample}. Since all four methods compute the same number of SEDs, the overhead is primarily determined by the length of the input shared vectors. Specifically, the length of $\langle \boldsymbol{g}_i \rangle$ matches that of the sampling result in $\mathsf{AlignSample}$, is 25 times that in $\mathsf{MaxPoolSample}$, and 4096 times that of $\langle \boldsymbol{v}_i \rangle$ in $\mathsf{LinfSample}$. The results indicate that $\mathsf{LinfSample}$ reduces both communication and runtime by three orders of magnitude compared to $\mathsf{RowSample}$ and $\mathsf{AlignSample}$. Furthermore, compared to $\mathsf{MaxPoolSample}$, $\mathsf{LinfSample}$ offers a significant advantage in terms of overhead.

\subsection{Byzantine Robustness of FLURP}\label{roubustness_iid}

\begin{figure}[!htbp]
	\centering
        \captionsetup[subfigure]{font=scriptsize} 
        \subfloat[Byzantine resilience under Task 1 (CIFAR-10+ResNet10).]{\includegraphics[width=0.48\textwidth]{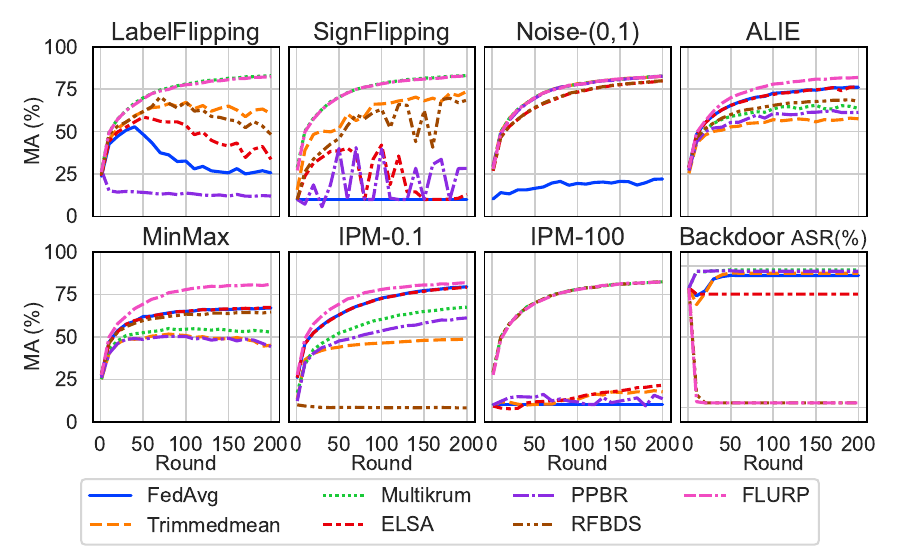}\label{subfig:Cifar10_8_subfigs}}
        \hspace{-0.2em}
        \subfloat[Byzantine resilience under Task 2 (ImageNet-12+MobilNet-V2).]{\includegraphics[width=0.48\textwidth]{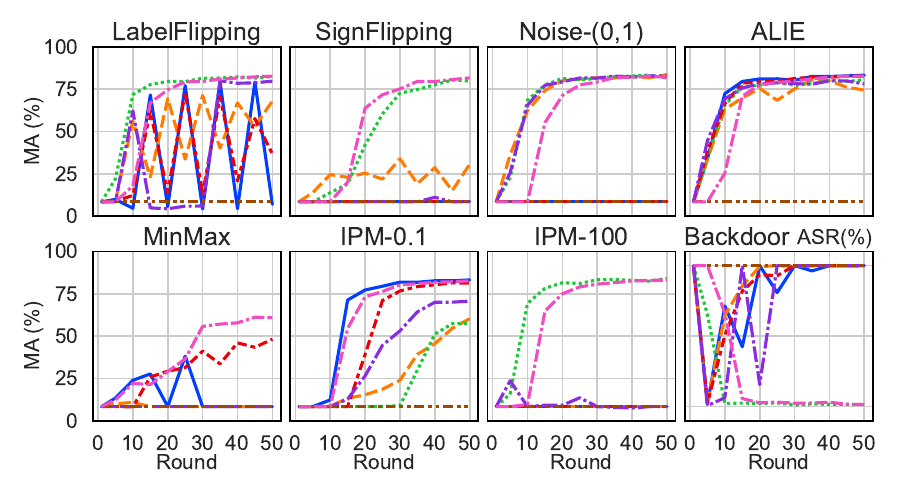}\label{subfig:ImageNet-12_8_subfigs}}
        \hspace{-0.2em}
        \subfloat[Byzantine resilience under Task 3 (AgNews+BiLSTM).]{\includegraphics[width=0.48\textwidth]{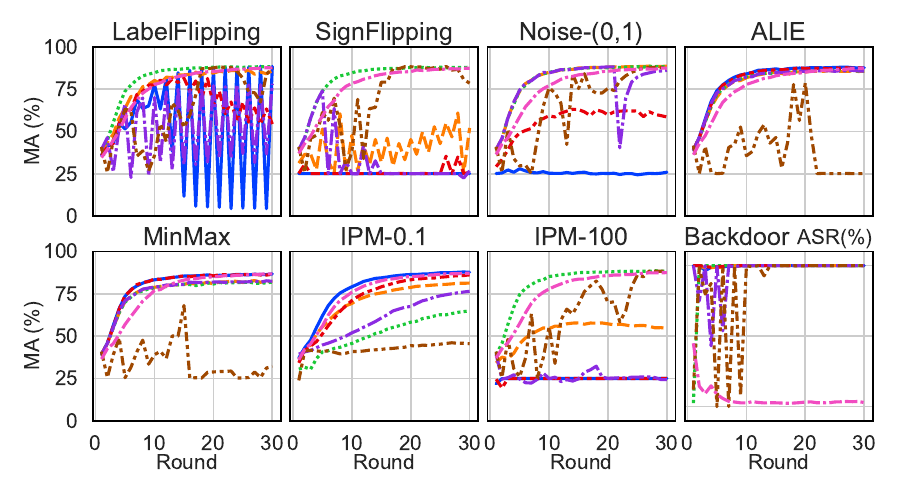}\label{subfig:AgNews_8_subfigs}}
	\caption{Under the scenario with a 40\% malicious client ratio, MA and ASR (\%) of FLURP and the other six comparative schemes against Byzantine attacks across three datasets.}
	\label{fig:EvaluationByzantineattacks}
\end{figure} 

We evaluate the Byzantine resilience of six baselines and FLURP:
\begin{itemize}
    \item \textbf{FedAvg}~\cite{fedavg-pmlr-v54-mcmahan17a}: Aggregates updates by computing their weighted average. 
    \item \textbf{Trimmedmean}~\cite{trimmedmean-pmlr-v80-yin18a}: Removes the highest and lowest 40\% of values in each dimension and computes the average of the remaining values.
    \item \textbf{Multikrum}~\cite{multikrum-NIPS2017_f4b9ec30}: Computes the sum of Euclidean distances to the half nearest neighbors of $\boldsymbol{g}_i$ and uses this sum as a score. A smaller score indicates higher credibility, and the server aggregates updates from the half with the smallest scores. 
    \item \textbf{ELSA}~\cite{ELSA-DBLP:conf/sp/RatheeSWP23}: Clips updates exceeding the ideal norm to the ideal norm, where the ideal norm is the average of the norms of updates from all benign clients.
    \item \textbf{PPBR}~\cite{ppbrf-dongcai2023privacy}: Calculates the sum of cosine similarities to the half nearest neighbors of $\boldsymbol{g}_i$ as a credibility score, then aggregates the half updates with the highest credibility scores.
    \item \textbf{RFBDS}~\cite{RFBDS-chen2023privacy}: Samples updates using $\mathsf{AlignSample}$, clusters them with OPTICS, and clips the updates using the $l_2$ norm.
\end{itemize}

We present the Byzantine resilience of BRAR baselines and FLURP across three tasks in Figure~\ref{fig:EvaluationByzantineattacks}, with all datasets following an IID distribution. 
Only FLURP consistently maintains the highest accuracy, suggesting that the combination of $\mathsf{LinfSample}$ and proximity-based defense effectively distinguishes poisoned updates under these attacks by sampling on $l_{\infty}$.
Furthermore, adversaries utilizing Noise and ALIE attacks are generally mitigated by most BRAR methods except for FedAvg. Notably, MinMax in Task 2 disrupts all BRARs except FLURP and ELSA, while other methods maintain moderate accuracy. We hypothesize that, in Task 2, the model underfits local data due to a low number of local epochs, resulting in higher variance in the updates uploaded each round. This increased variance complicates the detection of poisoned updates. Therefore, we adjust the number of local epochs to 10, leading to an increase in MA for FedAvg and FLURP to 63\% and 78\%, respectively, with the MA of other BRARs ranging between 60\% and 75\%. Notably, FLURP is the only method that effectively resists Backdoor attacks across all tasks. This further illustrates the challenge of detecting backdoor-based poisoned updates in text datasets. Overall, FLURP demonstrates exceptional Byzantine resilience. 

\subsection{Influence of Window Size}\label{sec:windowsize}

\begin{table*}[!htbp]
\caption{The MA and ASR (\%) of FLURP against Byzantine attacks with different sizes of sampling windows.}\label{table:windowsize}
\begin{center} 
\begin{adjustbox}{width=0.99\textwidth}
\begin{tabular}{c|c|S[table-format=2.1]S[table-format=2.1]S[table-format=2.1]S[table-format=2.1]S[table-format=2.1]S[table-format=2.1]S[table-format=2.1]S[table-format=2.1]S[table-format=2.1]S[table-format=2.1]S[table-format=2.1]S[table-format=2.1]S[table-format=2.1]S[table-format=2.1]S[table-format=2.1]S[table-format=2.1]S[table-format=2.1]S[table-format=3.1]cc} 
\toprule
\multirow{2}{*}{Task} & \multirow{2}{*}{Attack} & \multicolumn{20}{c}{The size of the sampling window. ($s = 2^w$)}                                                                                                                                                                                                                                                                                                                                                                                                                                                                                                                                 \\ 
\cline{3-22}
                                                                                           &                         & \multicolumn{1}{c}{$2^4$}    & \multicolumn{1}{c}{$2^5$}    & \multicolumn{1}{c}{$2^6$}   & \multicolumn{1}{c}{$2^7$}   & \multicolumn{1}{c}{$2^8$}   & \multicolumn{1}{c}{$2^9$}   & \multicolumn{1}{c}{$2^{10}$}  & \multicolumn{1}{c}{$2^{11}$}  & \multicolumn{1}{c}{$2^{12}$}  & \multicolumn{1}{c}{$2^{13}$}  & \multicolumn{1}{c}{$2^{14}$}  & \multicolumn{1}{c}{$2^{15}$}  & \multicolumn{1}{c}{$2^{16}$}  & \multicolumn{1}{c}{$2^{17}$}  & \multicolumn{1}{c}{$2^{18}$}  & \multicolumn{1}{c}{$2^{19}$}   & \multicolumn{1}{c}{$2^{20}$}   & \multicolumn{1}{c}{$2^{21}$}   & \multicolumn{1}{c}{$2^{22}$}                      & $2^{23}$                        \\ 
\hline
\multirow{8}{*}{\begin{tabular}[c]{@{}c@{}}Task 1\\(CIFAR-10+ResNet10)\end{tabular}}       & LabelFlipping           & 82.8                     & 82.4                     & 82.2                    & 82.3                    & 82.4                    & 82.5                    & 82.8                    & 82.7                    & 82.3                    & 82.7                    & 82.7                    & 82.7                    & 82.4                    & 82.4                    & 82.8                    & 82.3                     & 82.4                     & 82.7                     & \multicolumn{1}{l}{82.5} & \multicolumn{1}{l}{72.6}  \\
                                                                                           & SignFlipping            & 82.5                     & 82.6                     & 82.6                    & 82.7                    & 82.6                    & 82.6                    & 82.5                    & 82.9                    & 82.9                    & 82.6                    & 82.7                    & 82.5                    & 82.7                    & 82.8                    & 82.4                    & 82.8                     & 82.7                     & 82.6                     & \multicolumn{1}{l}{82.7} & \multicolumn{1}{l}{83.2}  \\
                                                                                           & Noise-(0,1)             & 82.3                     & 82.5                     & 82.3                    & 82.5                    & 82.7                    & 82.8                    & 82.3                    & 82.7                    & 82.8                    & 82.7                    & 82.1                    & 83.0                    & 82.9                    & 82.5                    & 82.6                    & 82.7                     & 82.9                     & 82.6                     & \multicolumn{1}{l}{82.8} & \multicolumn{1}{l}{82.9}  \\
                                                                                           & ALIE                    & 81.5                     & 81.7                     & 81.5                    & 81.8                    & 81.9                    & 82.0                    & 82.0                    & 82.1                    & 81.9                    & 82.0                    & 81.8                    & 82.1                    & 82.2                    & 82.0                    & 82.2                    & 82.1                     & 82.0                     & 81.3                     & \multicolumn{1}{l}{81.2} & \multicolumn{1}{l}{80.9}  \\
                                                                                           & MinMax                  & 61.6                     & 60.7                     & 61.4                    & 57.4                    & 73.4                    & 80.4                    & 80.6                    & 81.0                    & 81.1                    & 81.4                    & 81.7                    & 81.3                    & 81.5                    & 81.0                    & 73.9                    & 72.4                     & 70.9                     & 69.2                     & \multicolumn{1}{l}{71.5} & \multicolumn{1}{l}{68.1}  \\
                                                                                           & IPM-0.1                 & 81.8                     & 82.0                     & 81.5                    & 81.7                    & 81.9                    & 82.0                    & 81.8                    & 82.0                    & 82.0                    & 81.8                    & 81.6                    & 82.0                    & 81.8                    & 82.3                    & 82.4                    & 82.2                     & 82.3                     & 82.4                     & \multicolumn{1}{l}{82.0} & \multicolumn{1}{l}{82.4}  \\
                                                                                           & IPM-100                 & 82.4                     & 82.3                     & 82.3                    & 82.3                    & 82.5                    & 82.4                    & 82.4                    & 82.3                    & 82.4                    & 82.5                    & 82.3                    & 82.4                    & 82.2                    & 82.6                    & 82.7                    & 82.4                     & 82.3                     & 82.5                     & \multicolumn{1}{l}{82.6} & \multicolumn{1}{l}{82.7}  \\
                                                                                           & Backdoor(ASR)           & \multicolumn{1}{c}{25.2} & \multicolumn{1}{c}{14.2} & \multicolumn{1}{c}{3.6} & \multicolumn{1}{c}{8.5} & \multicolumn{1}{c}{3.7} & \multicolumn{1}{c}{3.7} & \multicolumn{1}{c}{3.3} & \multicolumn{1}{c}{3.7} & \multicolumn{1}{c}{4.1} & \multicolumn{1}{c}{3.5} & \multicolumn{1}{c}{3.4} & \multicolumn{1}{c}{3.9} & \multicolumn{1}{c}{3.6} & \multicolumn{1}{c}{4.0} & \multicolumn{1}{c}{4.3} & \multicolumn{1}{c}{3.8}  & \multicolumn{1}{c}{5.1}  & \multicolumn{1}{c}{3.8}  & 7.0                      & 48.0                      \\ 
\hline
\multirow{8}{*}{\begin{tabular}[c]{@{}c@{}}Task 2\\(ImageNet-12+MobilNet-V2)\end{tabular}} & LabelFlipping           & 82.5                     & 82.3                     & 82.7                    & 82.3                    & 82.1                    & 82.5                    & 82.5                    & 82.7                    & 82.3                    & 82.3                    & 82.1                    & 82.3                    & 82.7                    & 82.3                    & 82.6                    & 82.3                     & 80.1                     & 67.9                     & 60.6                     & -                         \\
                                                                                           & SignFlipping            & 81.8                     & 81.4                     & 81.2                    & 81.6                    & 81.5                    & 81.6                    & 81.1                    & 81.6                    & 81.4                    & 81.6                    & 81.2                    & 81.6                    & 81.5                    & 81.1                    & 81.3                    & 81.5                     & 81.6                     & 81.4                     & 81.6                     & -                         \\
                                                                                           & Noise-(0,1)             & 82.6                     & 82.2                     & 82.0                    & 82.4                    & 82.1                    & 82.4                    & 82.3                    & 82.7                    & 82.2                    & 82.0                    & 82.6                    & 82.4                    & 82.7                    & 82.5                    & 82.2                    & 82.2                     & 82.4                     & 82.6                     & 82.4                     & -                         \\
                                                                                           & ALIE                    & 82.2                     & 82.3                     & 82.4                    & 82.2                    & 82.7                    & 82.2                    & 82.7                    & 82.5                    & 82.9                    & 82.7                    & 82.5                    & 82.5                    & 82.2                    & 82.7                    & 82.2                    & 82.5                     & 81.3                     & 80.7                     & 76.1                     & -                         \\
                                                                                           & MinMax                  & 36.9                     & 47.5                     & 44.9                    & 50.9                    & 55.4                    & 61.4                    & 61.7                    & 60.6                    & 60.9                    & 61.9                    & 58.7                    & 60.6                    & 58.4                    & 55.9                    & 58.1                    & 49.9                     & 42.0                     & 38.2                     & 36.9                     & -                         \\
                                                                                           & IPM-0.1                 & 82.7                     & 82.4                     & 82.4                    & 82.1                    & 82.7                    & 82.3                    & 82.3                    & 82.4                    & 82.7                    & 82.4                    & 82.1                    & 82.2                    & 82.3                    & 82.4                    & 82.4                    & 82.3                     & 82.3                     & 81.3                     & 82.3                     & -                         \\
                                                                                           & IPM-100                 & 83.1                     & 83.2                     & 83.0                    & 83.1                    & 83.2                    & 83.1                    & 83.0                    & 83.2                    & 83.0                    & 83.2                    & 83.0                    & 83.2                    & 83.0                    & 83.3                    & 83.0                    & 83.1                     & 83.2                     & 83.3                     & 83.4                     & -                         \\
                                                                                           & Backdoor(ASR)           & 31.5                     & 14.1                     & 8.5                     & 10.2                    & 2.4                     & 1.4                     & 1.4                     & 1.7                     & 1.5                     & 2.8                     & 2.3                     & 2.7                     & 2.9                     & 2.8                     & 5.5                     & 7.6                      & 10.5                     & 97.0                     & 99.0                     & -                         \\ 
\hline
\multirow{8}{*}{\begin{tabular}[c]{@{}c@{}}Task 3\\(AgNews+BiLSTM)\end{tabular}}           & LabelFlipping           & 87.3                     & 87.1                     & 86.9                    & 87.0                    & 87.6                    & 86.8                    & 87.1                    & 87.3                    & 87.9                    & 87.3                    & 87.6                    & 87.3                    & 87.6                    & 87.3                    & 87.1                    & \multicolumn{1}{c}{84.8} & \multicolumn{1}{c}{82.6} & \multicolumn{1}{c}{65.6} & 62.1                     & 58.0                      \\
                                                                                           & SignFlipping            & 87.3                     & 87.4                     & 87.1                    & 86.6                    & 87.3                    & 87.4                    & 87.3                    & 87.1                    & 87.4                    & 87.1                    & 87.6                    & 87.3                    & 87.4                    & 87.1                    & 87.3                    & \multicolumn{1}{c}{87.1} & \multicolumn{1}{c}{87.4} & \multicolumn{1}{c}{87.6} & 86.4                     & 85.1                      \\
                                                                                           & Noise-(0,1)             & 87.4                     & 87.1                     & 87.8                    & 87.3                    & 87.8                    & 87.2                    & 87.4                    & 87.5                    & 87.3                    & 87.2                    & 87.5                    & 87.1                    & 87.8                    & 87.2                    & 87.5                    & \multicolumn{1}{c}{87.2} & \multicolumn{1}{c}{87.3} & \multicolumn{1}{c}{87.1} & 86.7                     & 86.1                      \\
                                                                                           & ALIE                    & 86.8                     & 86.5                     & 86.6                    & 87.0                    & 87.2                    & 86.8                    & 87.3                    & 87.0                    & 87.2                    & 87.3                    & 87.1                    & 87.1                    & 87.3                    & 87.4                    & 87.1                    & \multicolumn{1}{c}{87.4} & \multicolumn{1}{c}{82.4} & \multicolumn{1}{c}{80.0} & 78.8                     & 76.5                      \\
                                                                                           & MinMax                  & 60.8                     & 71.3                     & 76.9                    & 79.6                    & 81.2                    & 85.8                    & 86.5                    & 86.6                    & 86.4                    & 86.4                    & 86.8                    & 86.6                    & 81.2                    & 78.8                    & 74.0                    & \multicolumn{1}{c}{60.2} & \multicolumn{1}{c}{57.2} & \multicolumn{1}{c}{55.4} & 50.0                     & 48.8                      \\
                                                                                           & IPM-0.1                 & 85.5                     & 86.1                     & 87.4                    & 87.2                    & 87.7                    & 87.4                    & 87.1                    & 87.6                    & 87.1                    & 87.2                    & 87.5                    & 87.2                    & 87.4                    & 87.6                    & 87.1                    & \multicolumn{1}{c}{87.4} & \multicolumn{1}{c}{87.7} & \multicolumn{1}{c}{87.4} & 87.7                     & 87.1                      \\
                                                                                           & IPM-100                 & 87.6                     & 87.9                     & 87.6                    & 87.4                    & 87.8                    & 87.8                    & 87.5                    & 87.4                    & 87.6                    & 87.7                    & 87.9                    & 87.7                    & 87.5                    & 87.3                    & 87.9                    & \multicolumn{1}{c}{87.8} & \multicolumn{1}{c}{87.5} & \multicolumn{1}{c}{87.8} & 87.6                     & 87.4                      \\
                                                                                           & Backdoor(ASR)           & 32.7                     & 24.2                     & 19.6                    & 4.7                     & 3.2                     & 2.9                     & 3.4                     & 3.9                     & 3.7                     & 4.2                     & 3.8                     & 4.3                     & 3.5                     & 3.4                     & 6.5                     & \multicolumn{1}{c}{8.9}  & \multicolumn{1}{c}{15.5} & \multicolumn{1}{c}{22.9} & 42.8                     & 59.1                      \\
\bottomrule
\end{tabular}
\end{adjustbox}
\end{center}
\end{table*}

To ensure FLURP consistently filters the aforementioned attacks while training global models across diverse tasks, selecting the optimal sampling window size is crucial. Intuitively, when the sampling window size, $s = 2^w$, in $\mathsf{LinfSample}$ is too small, detecting subtle changes in poisoned model updates can be challenging. Conversely, an excessively large window size may result in significant information loss. We evaluate MA of FLURP across different window sizes under the eight aforementioned attacks, with the results presented in Table~\ref{table:windowsize}. 

Our analysis reveals that the attacks most sensitive to the window size $s$ are LabelFlipping, ALIE, MinMax, and Backdoor. Among these, FLURP defends against LabelFlipping and ALIE with smaller window sizes, while it requires moderate window sizes to defend against MinMax and Backdoor. Remarkably, FLURP continues to offer superior protection, even against the more complex attack types, while ensuring stability across varying settings. The remaining four simpler attacks show little sensitivity to the window size. Based on empirical analysis of the experimental results, a window size of $2^{11}$, $2^{12}$, $2^{13}$, or $2^{14}$ is deemed optimal for these million-sized models, striking the best balance between overhead and defense.

\subsection{Impact of Client Data Distribution}\label{sec:roubustness_dirichlet}

\begin{figure}[!htbp]
\centering
\includegraphics[width=0.48\textwidth]{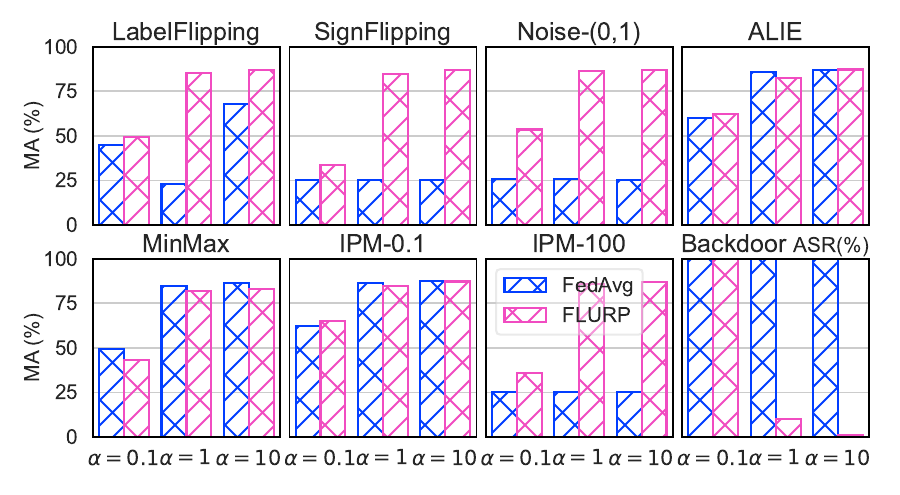}
\caption{The MA and ASR (\%) of FedAvg and FLURP under Task 3 (AgNews + Bi-LSTM) across three different non-IID distributions.}
\label{fig:dirichlet}
\end{figure}

We use the Dirichlet distribution, as described in \cite{lin2020ensemble_dirichlet}, to partition the training set. A smaller $\alpha$ indicates a greater degree of non-IIDness, with clients likely possessing samples predominantly from a single category. Conversely, a larger $\alpha$ results in more uniform data distributions across clients. We test three values of $\alpha$: 10, 1, and 0.1, with experimental results shown in Figure~\ref{fig:dirichlet}. When $\alpha = 10$, FLURP performs exceptionally well, demonstrating the highest robustness across all attack scenarios. At $\alpha = 1$, FLURP continues to exhibit superior performance compared to FedAvg in most scenarios, effectively defending against a variety of attacks. However, when $\alpha$ is reduced to 0.1, the increased divergence between model updates from benign clients introduces some instability. FedAvg also experiences significant degradation, and FLURP struggles to maintain optimal performance. In particular, FLURP faces challenges with MinMax and Backdoor attacks. Despite these challenges under $\alpha = 0.1$, FLURP remains a strong contender, especially when $\alpha$ is high. As $\alpha$ decreases, filtering out malicious clients becomes harder, but FLURP still provides robust defense in most situations.

\subsection{Impact of the Proportion of Malicious Clients.}\label{sec:proportion}

\begin{table}
\centering
\caption{MA and ASR (\%) of FLURP under varying proportions of malicious clients.}
\label{tab:varyingmalicious}
\begin{tabular}{c|ccccc} 
\toprule
\multirow{2}{*}{Attack} & \multicolumn{5}{c}{Malicious Clients} \\ 
\cline{2-6}
                                 & 0\%   & 10\%  & 20\%  & 30\%  & 40\%  \\ 
\hline
LabelFlipping (MA (\%))           & 89.1  & 89.2  & 88.8  & 88.9  & 87.9  \\
ALIE (MA (\%))                   & 88.6  & 88.5  & 88.0  & 88.2  & 87.2  \\
MinMax (MA (\%))                 & 88.9  & 88.3  & 88.6  & 87.3  & 86.4  \\
Backdoor (ASR (\%))               &  0.7  &  1.9  &  2.5  &  3.8  &  3.7  \\
\bottomrule
\end{tabular}
\end{table}

The experimental results in Table \ref{tab:varyingmalicious} demonstrate FLURP’s stable resilience against varying proportions of malicious clients. The MA of FLURP remains largely unaffected across attack types, with only a slight decline as the proportion of malicious clients increases. For example, the MA for LabelFlipping decreases from 89.1\% to 87.9\% as the proportion of malicious clients rises from 0\% to 40\%, indicating FLURP’s ability to maintain high accuracy despite attacks. Similarly, MA for ALIE and MinMax shows minimal degradation. Notably, the ASR for the Backdoor attack remain very low, starting at 0.7\% and only rising slightly to 3.7\% at 40\% malicious clients. This confirms that FLURP effectively mitigates the impact of backdoor attacks, even with a higher proportion of malicious clients, making it a highly effective defense for FL systems.

\subsection{Adaptive Attack against FLURP}\label{sec:Adaptive}

Considering the potential for collusion between the server and certain clients, we introduce a stronger adversary, Adaptive-FLURP, which has knowledge of both the FLURP aggregation rule and the updates from all benign clients. Inspired by \cite{minmax-shejwalkar2021manipulating}, this attack aims to maximize the number of poisoned updates selected by the aggregation rule, thereby increasing the deviation in the global model. The optimization problem for Adaptive-FLURP is formulated as follows:

\newcommand{\argmax}{\mathop{\text{argmax}}}

\begin{align}\label{eq:adaptive} 
    \argmax_{\gamma} m &= \left| \left\{ \boldsymbol{g}_i \mid \boldsymbol{g}_i \in \left[ \boldsymbol{g}_i \right]_{i \in \mathcal{C}'} \right\} \cap \left\{ \boldsymbol{g}_i \mid \boldsymbol{g}_i \in \left[ \boldsymbol{g}_i \right]_{i \in \mathcal{Y}_{\text{FLURP}}} \right\} \right| \notag \\
    \boldsymbol{g}_i &= \boldsymbol{\mu} + \gamma \cdot \boldsymbol{\sigma}, \quad \text{for} \  i \in \mathcal{C}'
\end{align}

where $\mathcal{C}'$ denotes malicious clients, $\mathcal{Y}_{\text{FLURP}}$ is the set of FLURP-selected clients, $\gamma$ is the scaling factor. Adaptive-FLURP solves \ref{eq:adaptive} using the method from \cite{minmax-shejwalkar2021manipulating} to find the optimal $\gamma$. We evaluate FLURP's defense against Adaptive-FLURP on ImageNet-12 and AgNews across IID and three non-IID distributions as described in Section~\ref{sec:modelsetting}. Metrics include MA, the proportion of malicious clients in $\mathcal{Y}_{\text{FLURP}}$ (denoted as $p$), and the poisoned update shift $\varepsilon$ (defined as $\varepsilon = \frac{|\gamma \cdot \boldsymbol{\sigma}|_2}{|\boldsymbol{\mu}|_2}$). The results in Figure~\ref{fig:adaptive} show that while Adaptive-FLURP is effective across various data distributions and allows some poisoned updates to bypass FLURP's detection, it must choose a smaller $\gamma$ to avoid detection by $\mathsf{LinfSample}$, resulting in minimal deviation in the global model. In these cases, the impact on the model's MA is negligible. When $\alpha = 0.1$, the distance between benign updates increases, allowing Adaptive-FLURP to inject more deviation through poisoned updates. Nevertheless, FLURP maintains model robustness. These experimental results demonstrate FLURP's reliability as a defense mechanism even against adaptive adversaries.

\begin{figure}[!htbp]
\centering
\includegraphics[width=0.48\textwidth]{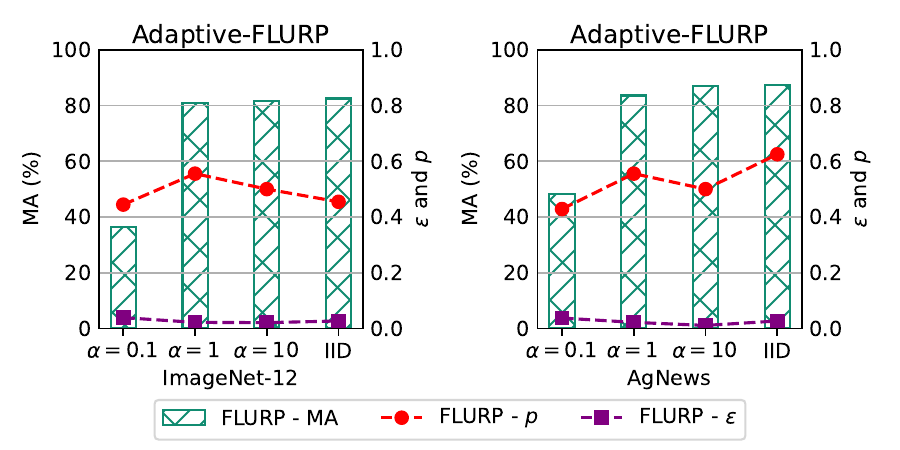}
\caption{Robustness of FLURP under Adaptive-FLURP Attack.}
\label{fig:adaptive}
\end{figure}

\subsection{Efficiency of Computing Medians}

\begin{figure}[!htbp]
\centering
\includegraphics[width=0.48\textwidth]{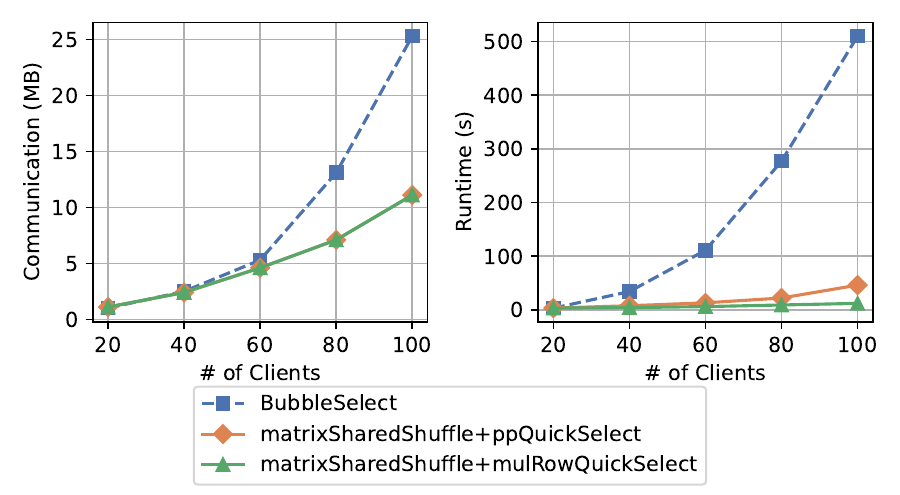}
\caption{Overheads of computing medians for servers.}
\label{fig:comm}
\end{figure}

 To evaluate FLURP's SMPC overhead, we compare the overhead of computing medians using the $\mathsf{matrixSharedShuffle+mulRowQuickSelect}$ protocol of FLURP with two baselines. The $\mathsf{BubbleSelect}$ protocol \cite{asscryptflow2-rathee2020cryptflow2} identifies the $m/2$ largest values in each row using privacy-preserving max-pooling and bubble sort. The $\mathsf{matrixSharedShuffle+ppQuickSelect}$ protocol uses the same shuffling algorithm as FLURP but applies the selection method from \cite{pbfl-li2023pbfl} to compute the shared median row by row. The overhead of these protocols is primarily determined by the number of clients, i.e., the size of the distance matrix. Thus, we analyze the communication and runtime overhead for client counts of 20, 40, 60, 80, and 100.

The comparison complexity of $\mathsf{BubbleSelect}$ is $O(m^3)$, while both $\mathsf{matrixSharedShuffle+mulRowQuickSelect}$ and $\mathsf{matrixSharedShuffle+ppQuickSelect}$ have a comparison complexity of $O(m)$. The $\mathsf{mulRowQuickSelect}$ protocol performs row-wise packing and comparisons, resulting in a worst-case complexity of $O(m^2)$ and an average complexity of $O(m)$. In contrast, $\mathsf{ppQuickSelect}$ has an average complexity of $O(m^2)$.

As shown in left side of Figure~\ref{fig:comm}, the communication overhead of all three protocols is similar with 20 and 40 clients. However, as the number of clients increases, the overhead of the $\mathsf{BubbleSelect}$ protocol, despite not requiring row shuffling, grows steeply, while the other two protocols exhibit a more gradual increase. At 100 clients, the communication overhead of $\mathsf{BubbleSelect}$ is approximately twice that of the other two protocols. 

Regarding runtime (right side of Figure~\ref{fig:comm}), all protocols perform similarly with 20 clients. As client count increases, $\mathsf{mulRowQuickSelect}$ shows the lowest runtime overhead and the slowest increase compared to the others. Overall, FLURP shows linear communication and runtime SMPC overhead, demonstrating its good scalability.

\section{Conclusion}

This study introduces the FLURP framework, which significantly advances data security, privacy, and access control within distributed environments. By integrating LURs and a novel, privacy-preserving proximity-based defense mechanism, FLURP effectively mitigates the risks posed by Byzantine adversaries. Furthermore, FLURP reduces both the computational and communication overhead in SMPC, making it a more efficient solution in FL systems. As a result, FLURP improves the overall robustness of FL against security threats, ensuring safer data management practices for AI applications.

In future work, we plan to enhance the performance of FLURP under more complex non-IID data distributions by exploring the integration of other techniques to strengthen its defense, such as data augmentation methods to reduce non-IIDness. We also plan to defend against adversaries capable of forging LURs by transforming $\mathsf{LinfSample}$ into an SMPC protocol that operates on two servers.

\bibliographystyle{unsrt}
\bibliography{sample-base}



\begin{IEEEbiography}[{\includegraphics[width=1in,height=1.25in,clip,keepaspectratio]{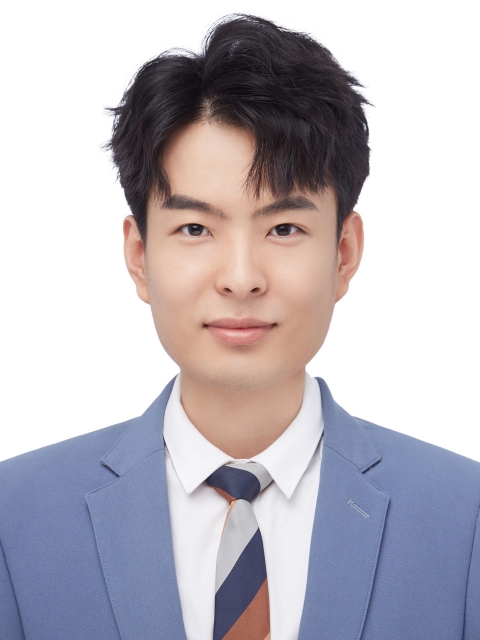}}]{Wenjie Li} received his M.S. degree from the School of Cyber Security and Computer, Hebei University of China in 2021. Currently, he is a Ph.D. student at the School of Cyber Engineering, Xidian University, China, and also a visiting student at the College of Computing and Data Science (CCDS) at Nanyang Technological University (NTU). He is working on cryptography, secure aggregation in federated learning and privacy-preserving machine learning.

\end{IEEEbiography}

\begin{IEEEbiography}[{\includegraphics[width=1in,height=1.25in,clip,keepaspectratio]{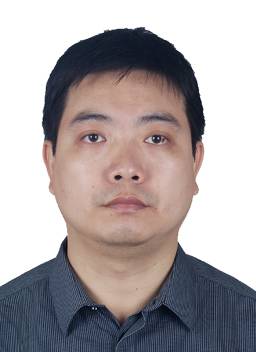}}]{Kai Fan} received his B.S., M.S. and Ph.D. degrees from Xidian University, P.R.China, in 2002, 2005 and 2007, respectively, in Telecommunication Engineering, Cryptography and Telecommunication and Information System. He is working as a professor in State Key Laboratory of Integrated Service Networks at Xidian University. He published over 70 papers in journals and conferences. He has managed 5 national research projects. His research interests include IoT security and information security. 
\end{IEEEbiography}

\begin{IEEEbiography}[{\includegraphics[width=1in,height=1.25in,clip,keepaspectratio]{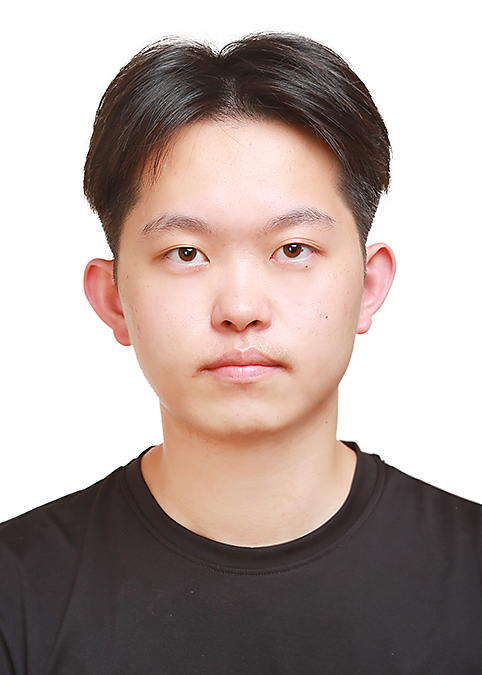}}]{Jingyuan Zhang} received the B.Eng. degree in cyber space security from the School of Cyber Science and Engineering, Wuhan University. He is currently pursuing a Master of Science in Artificial Intelligence at the College of Computing and Data Science (CCDS) at Nanyang Technological University (NTU). His research interests include federated learning and federated transfer learning.
\end{IEEEbiography}

\begin{IEEEbiography}[{\includegraphics[width=1in,height=1.25in,clip,keepaspectratio]{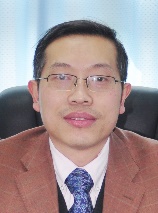}}]{Hui Li} was born in 1968 in Shaanxi Province of China. In 1990, he received his B. S. degree in radio electronics from Fudan University. In 1993, and 1998, he received his M.S. degree and Ph.D. degree in telecommunications and information system from Xidian University respectively. He is now a professor of Xidian University. His research interests include network and information security.
\end{IEEEbiography}

\begin{IEEEbiography}[{\includegraphics[width=1in,height=1.25in,clip,keepaspectratio]{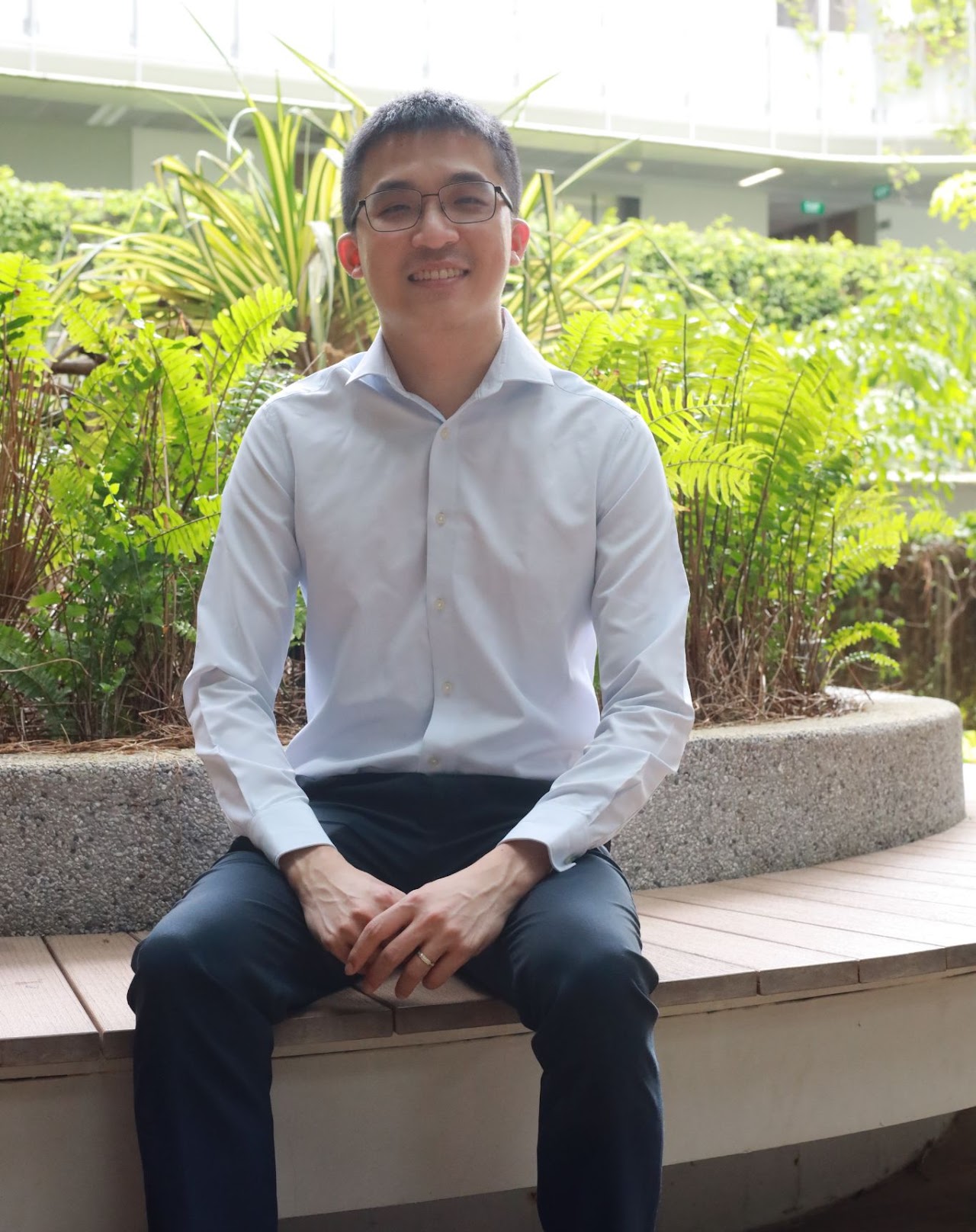}}]{Wei Yang Bryan Lim} is currently an Assistant Professor at the College of Computing and Data Science (CCDS), Nanyang Technological University (NTU), Singapore. Previously, he was Wallenberg-NTU Presidential Postdoctoral Fellow. In 2022, he earned his PhD from NTU under the Alibaba PhD Talent Programme and was affiliated with the CityBrain team of DAMO academy. His doctoral efforts earned him accolades such as the ``Most Promising Industrial Postgraduate Programme Student" award and the IEEE Technical Community on Scalable Computing (TCSC) Outstanding PhD Dissertation Award. He has also won the best paper awards, notably from the IEEE Wireless Communications and Networking Conference (WCNC) and the IEEE Asia Pacific Board. He serves on the Technical Programme Committee for FL workshops at flagship conferences (AAAI-FL, IJCAI-FL) and is a review board member for reputable journals like the IEEE Transactions on Parallel and Distributed Systems. In 2023, he co-edited a submission on ``Requirements and Design Criteria for Sustainable Metaverse Systems" for the International Telecommunication Union (ITU). He has also been a visiting scholar at various institutions such as the University of Tokyo, KTH Royal Institute of Technology, and the University of Sydney.
\end{IEEEbiography}

\begin{IEEEbiography}[{\includegraphics[width=1in,height=1.25in,clip,keepaspectratio]{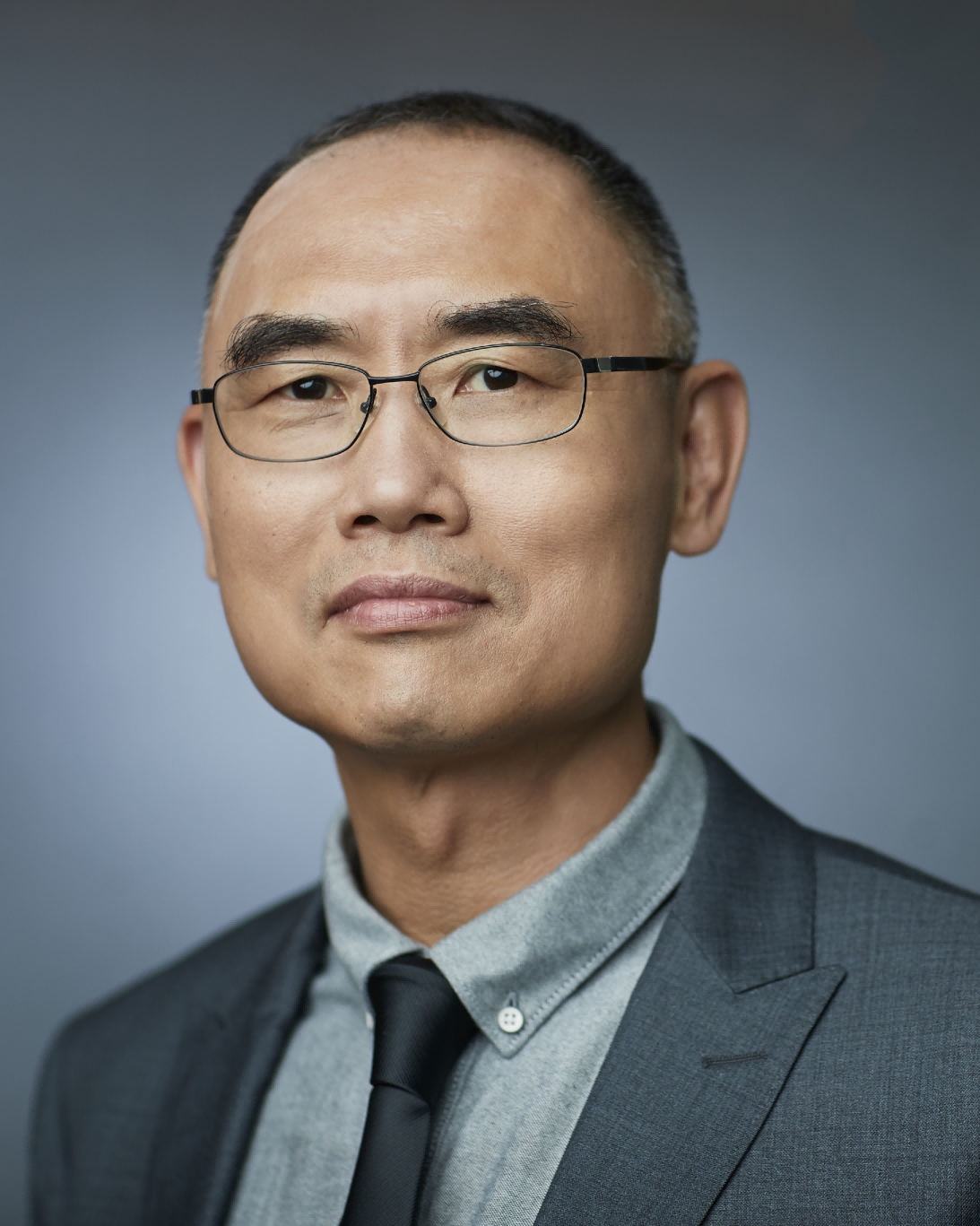}}]{Qiang Yang} is a Fellow of Canadian Academy of Engineering (CAE) and Royal Society of Canada (RSC), Chief Artificial Intelligence Officer of WeBank, a Chair Professor of Computer Science and Engineering Department at Hong Kong University of Science and Technology (HKUST). He is theConference Chair of AAAI-21, the Honorary Vice President of Chinese Association for Artificial Intelligence(CAAI) , the President of Hong Kong Society of Artificial Intelligence and Robotics (HKSAIR)and the President of Investment Technology League (ITL). He is a fellow of AAAI, ACM, CAAI, IEEE, IAPR, AAAS. He was the Founding Editor in Chief of the ACM Transactions on Intelligent Systems and Technology (ACM TIST) and the Founding Editor in Chief of IEEE Transactions on Big Data (IEEE TBD). He received the ACM SIGKDD Distinguished Service Award in 2017. He had been the Founding Director of the Huawei’s Noah’s Ark Research Lab between 2012 and 2015, the Founding Director of HKUST’s Big Data Institute, the Founder of 4Paradigm and the President of IJCAI (2017-2019). His research interests are artificial intelligence, machine learning, data mining and planning.
\end{IEEEbiography}

\vfill

\end{document}